\newcommand{\pdr}{\textsc{IC3}\xspace}
\newcommand{\pdria}{\textsc{IC3+IA}\xspace}
\newcommand{\T}{T\xspace}
\newcommand{\mathsat}{\textsc{MathSAT}\xspace}
\newcommand{\lra}{LRA\xspace}
\newcommand{\lia}{LIA\xspace}
\newcommand{\bv}{BV\xspace}
\newcommand{\concpdr}{\textsc{IC3(\lra)}\xspace}
\newcommand{\conctreepdr}{\textsc{TreeIC3+ITP(\lra)}\xspace}
\newcommand{\abspdr}[1]{{\pdria}(#1)\xspace}
\newcommand{\abstreepdr}[1]{\textsc{TreeIC3+IA(#1)}\xspace}
\newcommand{\zthree}{\textsc{z3}\xspace}
\newcommand{\abcdprove}{\textsc{ABC-dprove}\xspace}
\newcommand{\abcpdr}{\textsc{ABC-pdr}\xspace}
\newcommand{\tip}{\textsc{Tip}\xspace}
\newcommand{\icthree}{\textsc{IC3ref}\xspace}
\newcommand{\pkind}{\textsc{pKind}\xspace}
\newcommand{\kind}{\textsc{Kind}\xspace}
\newcommand{\atmoc}{\textsc{Atmoc}\xspace}
\newenvironment{small2}{\fontsize{8}{10}\selectfont}{\normalsize}
\newsavebox{\fmbox}
\newenvironment{fmpage}[1]
{\begin{lrbox}{\fmbox}\begin{minipage}{#1}}
{\end{minipage}\end{lrbox}\fbox{\usebox{\fmbox}}}
\newcommand{\mktuple}[1]{\ensuremath{\langle{#1}\rangle}}
\newcommand{\vars}{X} 
\newcommand{\evars}{\overline{\vars}} 
\newcommand{\abs}[1]{\widehat{#1}} 
\newcommand{\absrel}{\textit{H}}
\newcommand{\target}{\ensuremath{{\neg P}}}
\newcommand{\fset}[1]{\mathbb{#1}} 
\newcommand{\pred}{\ensuremath{p}}
\newcommand{\preds}{\ensuremath{\fset{P}}}
\newcommand{\abspath}[1][k]{\widehat{\textit{Path}}_{#1,\preds}}
\newcommand{\pequiv}[1][\preds]{\textit{EQ}_{#1}}
\newcommand{\absf}[2]{\ensuremath{\abs{#2}_{#1}}}
\newcommand{\absfs}[2]{\ensuremath{\abs{#2}}}
\newcommand{\bmc}[2]{\ensuremath{\text{BMC}^{#1}_{#2}}}
\newcommand{\relind}{\ensuremath{RelInd}}
\newcommand{\absrelind}{\ensuremath{AbsRelInd}}
\newcommand{\true}{\textit{true}\xspace}
\newcommand{\false}{\textit{false}\xspace}
\title{IC3 Modulo Theories via\\ Implicit Predicate Abstraction}
\author{\ 
Alessandro Cimatti \and Alberto Griggio \and Sergio Mover \and Stefano Tonetta
}
\institute{Fondazione Bruno Kessler\\\emailauth}
\begin{document}

\maketitle

\begin{abstract}
We present a novel approach for generalizing the IC3 algorithm for
invariant checking from finite-state to
infinite-state transition systems, expressed over some background
theories.
The procedure is based on a tight integration of IC3 with Implicit (predicate)
Abstraction, a technique that expresses abstract
transitions without computing explicitly the abstract system and is
incremental with respect to the addition of predicates. 
In this scenario, IC3 operates only at the Boolean level of the abstract state
space, discovering inductive clauses over the abstraction predicates.
Theory reasoning is confined within the underlying SMT solver,
and applied transparently when performing satisfiability checks.
When the current abstraction allows for a spurious counterexample, it
is refined by discovering and adding a sufficient set of new predicates.  
Importantly, this can be done in a completely incremental manner,
without discarding the clauses found in the previous search.

The proposed approach has two key advantages.
First, unlike current SMT generalizations of IC3, it allows to handle
a wide range of background theories without relying on ad-hoc
extensions, such as quantifier elimination or theory-specific clause
generalization procedures, which might not always be available, and
can moreover be inefficient. 
Second, compared to a direct exploration of the concrete transition
system, the use of abstraction gives a significant performance
improvement, as our experiments demonstrate.
\end{abstract}

\section{Introduction}
\label{sec:intro}

IC3~\cite{bradley} is an algorithm for the verification of invariant
properties of transition systems. It builds an over-approximation of
the reachable state space, using clauses obtained by generalization
while disproving candidate counterexamples.
In the case of finite-state systems, the algorithm is implemented on
top of Boolean SAT solvers, fully leveraging their features.  IC3 has
demonstrated extremely effective, and it is a fundamental core in all
the engines in hardware verification.

There have been several attempts to lift IC3 to the case of
infinite-state systems, for its potential applications to software,
RTL models, timed and hybrid systems, although the problem is in
general undecidable. These approaches are set in the framework of
Satisfiability Modulo Theory (SMT)~\cite{smtoverview} and hereafter
are referred to as IC3 Modulo
Theories~\cite{CGCAV12,DBLP:conf/formats/KindermannJN12,HoderB12,DBLP:conf/date/WelpK13}:
the infinite-state transition system is symbolically described by
means of SMT formulas, and an SMT solver plays the same role of
the SAT solver in the discrete case. The key difference is the need in
IC3 Modulo Theories for specific theory reasoning to deal with candidate
counterexamples. This led to the development of various techniques,
based on quantifier elimination or theory-specific clause
generalization procedures. Unfortunately, such extensions are
typically ad-hoc, and might not always be applicable in all theories
of interest. Furthermore, being based on the fully detailed SMT
representation of the transition systems, some of these solutions
(e.g. based on quantifier elimination) can be highly inefficient.

We present a novel approach to IC3 Modulo Theories, which is able to
deal with infinite-state systems by means of a tight integration with
\emph{predicate abstraction} (PA)~\cite{GS97}, a standard abstraction technique that
partitions the state space according to the equivalence relation induced by a set of
predicates.
In this work, we leverage \emph{Implicit Abstraction}
(IA)~\cite{fm09}, which allows to express abstract transitions without
computing explicitly the abstract system, and is fully incremental
with respect to the addition of new predicates.
In the resulting algorithm, called \pdria, the search proceeds as if
carried out in an abstract system induced by the set of current
predicates $\preds$ -- in fact, \pdria only generates clauses over
\preds. The key insight is to exploit IA to obtain an abstract version
of the relative induction check.
%
%
When an abstract counterexample is found, as in Counter-Example Guided
Abstraction-Refinement (CEGAR), it is simulated in the concrete space
and, if spurious, the current abstraction is refined by adding a set
of predicates sufficient to rule it out.

%
The proposed approach has several advantages.
First, unlike current SMT generalizations of IC3, \pdria allows to
handle a wide range of background theories without relying on ad-hoc
extensions, such as quantifier elimination or theory-specific clause
generalization procedures.
The only requirement is the availability of an effective technique for
abstraction refinement, for which various solutions exist for many
important theories (e.g. interpolation~\cite{HJMM04}, unsat core
extraction, or weakest precondition).
Second, the analysis of the infinite-state transition system is now
carried out in the abstract space, which is often as effective as an
exact analysis, but also much faster.  Finally, the approach is
completely incremental, without having to discard or reconstruct
clauses found in the previous iterations.

We experimentally evaluated \pdria on a set of benchmarks from
heterogeneous sources~\cite{swmcc,kind-fmcad08,DBLP:conf/formats/KindermannJN12},
with very positive results.
First,
our implementation of \pdria is significantly more expressive than the SMT-based IC3 of \cite{CGCAV12},
being able to handle not only the theory of Linear Rational Arithmetic (\lra) like \cite{CGCAV12},
but also those of Linear Integer Arithmetic (\lia) and fixed-size bit-vectors (\bv).
Second, in terms of performance \pdria proved to be uniformly superior to a wide range of 
alternative techniques and tools,
including state-of-the-art implementations of the bit-level IC3 algorithm (\cite{fmcad-een,tip,ic3ref}),
other approaches for IC3 Modulo Theories (\cite{CGCAV12,HoderB12,DBLP:conf/formats/KindermannJN12}),
and techniques based on k-induction and invariant discovery (\cite{kind-fmcad08,pkind}).
%
%
A remarkable property of \pdria is that it can deal with a large
number of predicates: in several benchmarks, \emph{hundreds of
  predicates} were introduced during the search. Considering that an
explicit computation of the abstract transition relation (e.g. based
on All-SMT~\cite{LahiriNO06}) often becomes impractical with a few
dozen predicates, we conclude that IA is fundamental to scalability,
allowing for efficient reasoning in a fine-grained abstract space.

%
The rest of the paper is structured as follows.
In Section~\ref{sec:background} we present some background on IC3 and
Implicit Abstraction.
In Section~\ref{sec:implabsic3} we describe \pdria and prove its formal
properties.
In Section~\ref{sec:related-work} we discuss the related work.
In Section~\ref{sec:expeval} we experimentally evaluate our method.
In Section~\ref{sec:conclusion} we draw some conclusions and present
directions for future work.

\section{Background}
\label{sec:background}

\subsection{Transition Systems}
\label{sec-background-trans}

Our setting is standard first order logic. 
We use the standard notions of theory, satisfiability, validity, logical consequence. 
We denote formulas with $\varphi, \psi, I, T, P$, variables with $x$, $y$,
and sets of variables with $\vars$, $Y$, $\evars$, $\abs{\vars}$.
Unless otherwise specified, we work on quantifier-free formulas,
and we refer to 0-arity predicates as Boolean variables, and to 0-arity uninterpreted functions as (theory) variables.
A literal is an atom or its negation. A \emph{clause} is a disjunction of literals,
whereas a \emph{cube} is a conjunction of literals. 
If $s$ is a cube $l_1 \land \ldots \land l_n$, with $\neg s$ we denote the clause $\neg l_1 \lor \ldots \lor \neg l_n$, and vice versa.
A formula is in conjunctive normal form (CNF) if it is a conjunction of clauses,
and in disjunctive normal form (DNF) if it is a disjunction of cubes.
With a little abuse of notation, 
we might sometimes denote formulas in CNF
$C_1 \land \ldots \land C_n$ as sets of clauses $\{C_1, \ldots, C_n\}$, and vice versa. 
If $X_1, \ldots, X_n$ are a sets of variables and $\varphi$ is a formula,
we might write $\varphi(X_1, \ldots, X_n)$ to indicate that all the variables occurring in $\varphi$ are elements of $\bigcup_i X_i$.
For each variable $x$, we assume that there exists a corresponding variable $x'$ (the \emph{primed version} of $x$).
If $\vars$ is a set of variables, $\vars'$ is the set obtained by
replacing each element $x$ with its primed version ($\vars' = \{x'
\mid x \in \vars\}$), $\evars$ is the set obtained by replacing
each $x$ with $\overline{x}$ ($\overline{\vars} = \{\overline{x} \mid x \in
\vars\}$) and $\vars^n$ is the set obtained by adding $n$ primes to 
each variable ($\vars^n = \{x^n \mid x \in \vars\}$).

Given a formula $\varphi$, $\varphi'$ is the formula obtained by
adding a prime to each variable occurring in $\varphi$.
Given a theory \T, we write $\varphi \models_\T \psi$ (or simply
$\varphi \models \psi$) to denote that the formula $\psi$ is a logical
consequence of $\varphi$ in the theory \T.

A \emph{transition system} $S$ is a tuple $S=\mktuple{X,I,T}$ where
$X$ is a set of (state) variables, $I(X)$ is a formula representing
the initial states, and $T(X,X')$ is a formula representing the
transitions.
A \emph{state} of $S$ is an assignment to the variables
$X$. A \emph{path} of $S$ is a finite sequence $s_0,s_1,\ldots,s_k$ of
states such that $s_0\models I$ and for all $i$, $0\leq i<k$,
$s_i,s'_{i+1}\models T$. 

Given a formula $P(X)$, the \emph{verification problem} denoted with
$S\models P$ is the problem to check if for all paths
$s_0,s_1,\ldots,s_k$ of $S$, for all $i$, $0\leq i\leq k$, $s_i\models
P$. Its dual is the \emph{reachability problem}, which is the
problem to find a path $s_0,s_1,\ldots,s_k$ of $S$ such that
$s_k\models \neg P$. $P(X)$ represents the ``good'' states, while
$\neg P$ represents the ``bad'' states.

Inductive invariants are central to solve the verification
problem. $P$ is an inductive invariant iff \begin{inparaenum}[(i)]
\item $I(X) \models P(X)$; and 
\item $P(X) \land T(X, X') \models P(X')$.
\end{inparaenum}
A weaker notion is given by relative inductive invariants: given a
formula $\phi(X)$, $P$ is inductive relative to $\phi$
iff \begin{inparaenum}[(i)]
\item $I(X) \models P(X)$; and 
\item $\phi(X)\land P(X) \land T(X, X') \models P(X')$.
\end{inparaenum}

\subsection{\pdr with SMT}

\pdr~\cite{bradley} is an efficient algorithm for the verification of
finite-state systems, with Boolean state variables and propositional
logic formulas.  \pdr was subsequently extended to the SMT case in
\cite{CGCAV12,HoderB12}.  In the following, we present its main ideas,
following the description of \cite{CGCAV12}.  For brevity, we have to
omit several important details, for which we refer to
\cite{bradley,CGCAV12,HoderB12}.

Let $S$ and $P$ be a transition system and a set of good states as in \S\ref{sec-background-trans}.
The \pdr algorithm tries to prove that $S\models P$ by finding
a formula $F(X)$ such that:
\begin{inparaenum}[(i)]
\item $I(X) \models F(X)$;
\item $F(X) \land T(X, X') \models F(X')$; and
\item $F(X) \models P(X)$.
\end{inparaenum}

In order to construct an inductive invariant $F$, \pdr
maintains a sequence of formulas (called \emph{trace})
$F_0(X), \ldots, F_k(X)$ such that:
\begin{inparaenum}[(i)]
\item $F_0 = I$;
\item $F_i \models F_{i+1}$;
\item $F_i(X) \land T(X, X') \models F_{i+1}(X')$;
\item for all $i<k$, $F_i \models P$.
\end{inparaenum}
Therefore, each element of the trace $F_{i+1}$, called \emph{frame},
is inductive relative to previous one, $F_i$. IC3 strengthens the
frames by finding new relative inductive clauses by checking the
unsatisfiability of the formula:
\begin{equation}\label{eq:ic3-inductive-check}
\relind(F,T,c):=F\land c \land T \land \neg c'.
\end{equation}

More specifically, the algorithm proceeds incrementally, by
alternating two phases: a blocking phase, and a propagation phase.
In the \emph{blocking} phase, the trace is analyzed to prove that
no intersection between $F_k$ and $\neg P(X)$ is possible. If such
intersection cannot be disproved on the current trace, the property
is violated and a counterexample can be reconstructed. During the
blocking phase, the trace is enriched with additional formulas, which can
be seen as strengthening the approximation of the reachable state
space. At the end of the blocking phase, if no violation is found, $F_k\models P$.

The \emph{propagation} phase tries to extend the trace with a new
formula $F_{k+1}$, moving forward the clauses from preceding
$F_i$'s. If, during this process, two consecutive frames
become identical (i.e. $F_i=F_{i+1}$), then a fixpoint is reached, and \pdr 
terminates with $F_i$ being an inductive invariant proving the
property.

In the \textit{blocking} phase \pdr maintains a set of
pairs $(s, i)$, where $s$ is 
a set of states that
can lead to a bad state, and $i > 0$ is a position in the current
trace. 
New formulas (in the form of clauses)
to be added to 
the current trace
are derived by (recursively) proving that a set $s$ of a pair $(s, i)$
is unreachable starting from the formula $F_{i-1}$.
This is done by checking the satisfiability of the formula 
$\relind(F_{i-1},T,\neg{s})$.
If the formula is unsatisfiable, 
then $\neg s$ is \emph{inductive relative to $F_{i-1}$},
and \pdr strengthens $F_i$ by adding $\neg s$ to it\footnote{
  $\neg s$ is actually \emph{generalized} before being added to
  $F_i$. Although this is fundamental for the \pdr effectiveness,
  we do not discuss it for simplicity.}, thus
\emph{blocking} the bad state $s$ at $i$. 
If, instead, \eqref{eq:ic3-inductive-check} is satisfiable,
then the overapproximation $F_{i-1}$ is not strong enough to show that $s$
is unreachable.
In this case, let $p$ be 
a subset of the states in $F_{i-1} \land \neg s$ 
such that all the states in $p$ lead to a state in $s'$ in one transition step.
Then, \pdr continues by trying to show that $p$ is not reachable in one step from $F_{i-2}$ 
(that is, it tries to block the pair $(p, i-1)$).
This procedure continues recursively, possibly generating other pairs
to block at earlier points in the trace, until either \pdr generates
a pair $(q, 0)$, meaning that the system does not satisfy the
property, or the trace is eventually strengthened so that the original
pair $(s, i)$ can be blocked.

A key difference between the original Boolean \pdr and its SMT extensions
in \cite{CGCAV12,HoderB12} is in the way sets of states to be blocked or generalized are constructed.
In the blocking phase, when trying to block a pair $(s, i)$,
if the formula \eqref{eq:ic3-inductive-check} is satisfiable,
then a new pair $(p, i-1)$ has to be generated such that $p$ is a cube in the \emph{preimage of $s$ wrt. $T$}.
In the propositional case, $p$ can be obtained from the model $\mu$ of \eqref{eq:ic3-inductive-check} generated by the SAT solver,
by simply dropping the primed variables occurring in $\mu$.
This cannot be done in general in the first-order case, 
where the relationship between the current state variables $X$ and their primed version $X'$ 
is encoded in the theory atoms,
which in general cannot be partitioned into a primed and an unprimed set.
The solution proposed in \cite{CGCAV12} is to compute $p$ by existentially quantifying \eqref{eq:ic3-inductive-check} 
and then applying an \emph{under-approximated} existential elimination algorithm for linear rational arithmetic formulas.
Similarly, in \cite{HoderB12} a theory-aware generalization algorithm
for linear rational arithmetic (based on interpolation) was proposed,
in order to strengthen $\neg s$ before adding it to $F_i$ after having successfully blocked it.

\subsection{Implicit Abstraction}

\subsubsection{Predicate abstraction}

Abstraction~\cite{CGL94} is used to reduce the search space while
preserving the satisfaction of some properties such as invariants. If
$\abs{S}$ is an abstraction of $S$, if a condition is reachable in
$S$, then also its abstract version is reachable in $\abs{S}$.
Thus, if we prove that a set of states is not reachable in
$\abs{S}$, the same can be concluded for the concrete transition
system $S$.

In Predicate Abstraction \cite{GS97}, the
abstract state-space is described with a set of
predicates.
Given a TS $S$, we select a set $\preds$ of predicates, such that each
predicate $\pred\in\preds$ is a formula over the variables $\vars$
that characterizes relevant facts of the system. For every
$\pred\in\preds$, we introduce a new abstract variable $x_\pred$ and define
$\vars_\preds$ as $\{x_\pred\}_{\pred\in\preds}$. 
The abstraction relation $H_\preds$ is defined as
$\absrel_\preds(\vars,\vars_\preds):=\bigwedge_{\pred\in
\preds}x_\pred\leftrightarrow \pred(\vars)$.
Given a formula $\phi(\vars)$, the abstract version
$\absf{\preds}{\phi}$ is obtained by existentially quantifying the
variables $X$, i.e., $\absf{\preds}{\phi}=\exists
\vars.(\phi(\vars)\wedge \absrel_\preds(\vars,\vars_\preds))$. Similarly for a
formula over $X$ and $X'$, $\absf{\preds}{\phi}=\exists
\vars,\vars'.(\phi(\vars,\vars')\wedge
\absrel_\preds(\vars,\vars_\preds)\wedge \absrel_\preds(\vars',\vars_\preds'))$.  The
abstract system with $\abs{S}_\preds=\mktuple{\vars_\preds,
  \abs{I}_\preds, \abs{T}_\preds}$ is obtained by abstracting the
initial and the transition conditions. In the following, when clear
from the context, we write just $\absf{}{\phi}$ instead of
$\absf{\preds}{\phi}$.

Since most model checkers deal only with quantifier-free formulas, the
computation of $\abs{S}_\preds$ requires the elimination of the existential
quantifiers. This may result in a bottleneck and some techniques compute
weaker/more abstract systems (cfr., e.g., \cite{STT09}).

\subsubsection{Implicit predicate abstraction}
\begin{wrapfigure}{r}{0.5\textwidth}
\vspace{-10mm}
\begin{center}
\scalebox{0.7}{\input{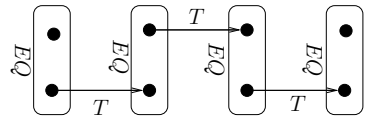}}
\vspace{-5mm}
\end{center}
\caption{Abstract path.}\label{fig:abspath}
\end{wrapfigure}
Implicit predicate abstraction~\cite{fm09} embeds the definition of
the predicate abstraction in the encoding of the path.
This is based on the following formula:
\vspace{-2mm}
\begin{equation}
\pequiv(\vars,\evars):=\bigwedge_{\pred\in\preds} \pred(\vars)
\leftrightarrow \pred(\evars)
\vspace{-2mm}
\end{equation}
\noindent which relate two concrete states corresponding to the same
abstract state. The formula $\abspath:=\bigwedge_{1\leq h<
  k} (T(\evars^{h-1},\vars^h) \wedge \pequiv(\vars^h,\evars^h)) \wedge
T(\evars^{k-1},\vars^{k})$ is satisfiable iff there exists a path of
$k$ steps in the abstract state space.
Intuitively, instead of having a contiguous sequence of transitions,
the encoding represents a sequence of disconnected transitions where
every gap between two transitions is forced to lay in the same abstract
state (see Fig.~\ref{fig:abspath}).
$\bmc{k}{\preds}$ encodes the abstract bounded model checking problem
and is obtained from $\abspath$ by adding the abstract initial and
target conditions:
$\bmc{k}{\preds}=I(\vars^0)\wedge \pequiv(\vars^0,\evars^0)\wedge
\abspath \wedge \pequiv(\vars^k,\evars^k) \wedge \target(\evars^k)$.

\section{IC3 with Implicit Abstraction}
\label{sec:implabsic3}

\subsection{Main idea}

The main idea of \pdria is to mimic how IC3 would work on the abstract
state space defined by a set of predicates $\preds$, but using IA to
avoid quantifier elimination to compute the abstract transition
relation. Therefore, clauses, frames and cubes are restricted to have
predicates in $\preds$ as atoms. We call these clauses, frames and
cubes respectively $\preds$-clauses, $\preds$-formulas, and
$\preds$-cubes. Note that for any $\preds$-formula $\phi$ (and thus
also for $\preds$-cubes and $\preds$-clauses),
$\absfs{\preds}{\phi}=\phi[\vars_\preds/\preds]\wedge\exists
X.(\bigwedge_{p\in\preds}x_p\leftrightarrow p(X))$, and thus
$\absfs{\preds}{(\neg\phi)}=\neg\absfs{\preds}{(\phi)}$. 

The key point of \pdria is to use an abstract version
of the check (\ref{eq:ic3-inductive-check}) to prove that an abstract
clause $\absfs{\preds}{c}$ is inductive relative to the abstract frame
$\absfs{\preds}{F}$:
\begin{eqnarray}\label{eq:ic3ia-inductive-check}
  \absrelind(F,T,c,\preds)&:=&F(\vars) \land c(\vars) \land
  \nonumber\\ && \pequiv[\preds](\vars,\evars) \land T(\evars,\evars')
  \land \pequiv[\preds](\evars',\vars') \land \neg c(\vars')
\end{eqnarray}

{
\sloppypar
\begin{theorem}
Consider a set $\preds$ of predicates, $\preds$-formulas $F$ and a
$\preds$-clause $c$.
$\relind(\absfs{\preds}{F},\absfs{\preds}{T},\absfs{\preds}{c})$ is
satisfiable iff $\absrelind(F,T,c,\preds)$ is satisfiable. In
particular, if $s\models\absrelind(F,T,c,\preds)$, then
$\absfs{\preds}{s}\models\relind(\absfs{\preds}{F},\absfs{\preds}{T},\absfs{\preds}{c})$.
\end{theorem}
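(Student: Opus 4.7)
The plan is to unfold $\relind$ and $\absrelind$ using their definitions, and then to build an explicit model in each direction. The key supporting fact, already noted in the paragraph preceding the theorem, is that for any $\preds$-formula $\phi$ one has $\absfs{\preds}{\phi}(X_\preds) = \phi[X_\preds/\preds]\wedge \exists X.\,\absrel_\preds(X,X_\preds)$; hence, once a concrete $X_0$ satisfying $\absrel_\preds(X_0,X_\preds)$ has been fixed, $\absfs{\preds}{\phi}(X_\preds)$ and $\phi(X_0)$ are equivalent, and the analogous remark holds for $\neg \phi$ in place of $\phi$ (using that $\absfs{\preds}{\neg \phi}=\neg\absfs{\preds}{\phi}$ whenever the realizability conjunct for $X_\preds$ is known to hold).

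For $(\Leftarrow)$, which also yields the ``in particular'' statement, I start from $s\models\absrelind(F,T,c,\preds)$ and define $\absfs{\preds}{s}$ by setting $x_p := p(s[X])$ and $x_p' := p(s[X'])$. Then $\absfs{\preds}{F}(X_\preds)$ and $\absfs{\preds}{c}(X_\preds)$ follow from $F(s[X])$ and $c(s[X])$ via the reduction, with $s[X]$ serving as realizability witness; $\absfs{\preds}{T}(X_\preds,X_\preds')$ is satisfied by the concrete witnesses $s[\overline{X}], s[\overline{X}']$, where the conjunct $T$ holds directly and the required $\absrel_\preds$-instances are exactly what the two $\pequiv[\preds]$ conjuncts in $\absrelind$ provide; finally $\neg\absfs{\preds}{c}(X_\preds')$ follows from $\neg c(s[X'])$ together with $\absrel_\preds(s[X'], X_\preds')$.

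For $(\Rightarrow)$, from a model of $\relind(\absfs{\preds}{F},\absfs{\preds}{T},\absfs{\preds}{c})$ assigning $X_\preds, X_\preds'$, I extract the concrete existential witnesses hidden inside $\absfs{\preds}{T}$: a pair $\overline{X}_0,\overline{X}_0'$ with $T(\overline{X}_0,\overline{X}_0')$, $\absrel_\preds(\overline{X}_0, X_\preds)$, and $\absrel_\preds(\overline{X}_0', X_\preds')$. These two witnesses also discharge the realizability conjuncts of $\absfs{\preds}{F}$ and $\absfs{\preds}{c}$, so $F[X_\preds/\preds]$ and $c[X_\preds/\preds]$ translate to $F(\overline{X}_0)$ and $c(\overline{X}_0)$; and $\neg\absfs{\preds}{c}(X_\preds')$, combined with the available witness $\overline{X}_0'$, forces $\neg c[X_\preds'/\preds']$ and hence $\neg c(\overline{X}_0')$. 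Assigning $X, \overline{X} \mapsto \overline{X}_0$ and $X', \overline{X}' \mapsto \overline{X}_0'$ now satisfies $\absrelind(F,T,c,\preds)$: every non-trivial conjunct is in hand by construction, and the two $\pequiv[\preds]$ conjuncts become trivial since both arguments coincide.

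The part I expect to need most care is pure bookkeeping: $\absrelind$ has four concrete copies of the state variables ($X,\overline{X},\overline{X}',X'$) while $\relind$ on the abstract side has only two ($X_\preds, X_\preds'$), so a clear dictionary between the two models must be exhibited. A subtle but localized point is that $\absfs{\preds}{\neg c}$ and $\neg\absfs{\preds}{c}$ coincide only when the realizability conjunct for $X_\preds'$ is known; here that conjunct comes for free from the witness extracted from $\absfs{\preds}{T}$, so the negation step causes no real trouble, and both directions reduce to mechanical verification of conjuncts.
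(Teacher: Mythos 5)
Your proof is correct and follows essentially the same route as the paper: in one direction you abstract the model of $\absrelind(F,T,c,\preds)$ by evaluating the predicates on the $\vars/\vars'$ components and use the $\pequiv[\preds]$ conjuncts as the abstraction-relation instances needed for $\absfs{\preds}{T}$; in the other you extract the concrete witness hidden in $\absfs{\preds}{T}$ and collapse $\vars=\evars$, $\vars'=\evars'$ so that the $\pequiv[\preds]$ conjuncts hold trivially, exactly as in the paper's proof. Your explicit treatment of the realizability conjunct when passing between $\neg\absfs{\preds}{c}$ and $\absfs{\preds}{(\neg c)}$ is a slightly more careful rendering of the paper's remark that $c$ is a Boolean combination of $\preds$, but it is the same argument.
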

}

\begin{proof}
Suppose $s\models\absrelind(F,T,c,\preds)$.  Let us denote with
$\overline{t}$ and ${t}$ the projections of $s$ respectively over
$\evars\cup\evars'$ and over $\vars\cup\vars'$. Then
$\overline{t}\models T$ and therefore
$\absfs{\preds}{\overline{t}}\models\absfs{\preds}{T}$. Since
$s\models\pequiv[\preds](\vars,\evars)\wedge
\pequiv[\preds](\evars',\vars')$, $\absfs{\preds}{t}$ and
$\absfs{\preds}{\overline{t}}$ are the same abstract transition and
therefore $\absfs{\preds}{{t}}\models\absfs{\preds}{T}$. Since
$t\models F\wedge c$, then $\absfs{\preds}{t}\models
\absfs{\preds}{F}\wedge\absfs{\preds}{c}$. Since $t\models \neg c'$,
then $\absfs{\preds}{t}\models \absfs{\preds}{(\neg c')}$ and since
$c$ is a Boolean combinations of $\preds$, then
$\absfs{\preds}{t}\models \neg \absfs{\preds}{c}'$. Thus,
$\abs{s}\models\abs{t}\models\relind(\absfs{\preds}{F},\absfs{\preds}{T},\absfs{\preds}{c})$.

For the other direction, suppose
$\overline{t}\models\relind(\absfs{\preds}{F},\absfs{\preds}{T},\absfs{\preds}{c})$. Then
there exists an assignment $t$ to $\vars\cup\vars'$ such that
$t\models T$ and $\absfs{\preds}{t}=\overline{t}$. Therefore, $t\models F
(\vars) \land c(\vars) \land \pequiv[\preds](\vars,\vars) \land
T(\vars,\vars') \land \pequiv[\preds](\vars',\vars') \land \neg
c(\vars')$, which concludes the proof.
\end{proof}

\subsection{The algorithm}

The \pdria algorithm is shown in Figure~\ref{fig:ic3ia-pseudocode}.
The \pdria has the same structure of \pdr as described in
\cite{fmcad-een}. Additionally, it keeps a set of predicates $\preds$,
which are used to compute new clauses. The only points where \pdria
differs from \pdr (shown in red in Fig.~\ref{fig:ic3ia-pseudocode})
are in picking $\preds$-cubes instead of concrete
states, the use of $\absrelind$ instead of $\relind$, and in the fact
that a spurious counterexample may be found and, in that case, new
predicates must be added.

More specifically, the algorithm consists of a big loop, in which each
iteration is divided into phases, the blocking and the propagation
phases. The blocking phase starts by picking a cube $c$ of predicates
representing an abstract state in the last frame violating the
property. This is recursively blocked along the trace by checking
if$\absrelind(F_{i-1},T,\neg c,\preds)$ is satisfiable.  If the
relative induction check succeeds, $F_i$ is strengthened with a
generalization of $\neg c$.  If the check fails, the recursive
blocking continues with an \emph{abstract predecessor} of $c$, that
is, a $\preds$-cube in $F_i\wedge\neg c$ that leads to $c$ in one
step.  This recursive blocking results in either strengthening of the
trace or in the generation of an \emph{abstract counterexample}.  If
the counterexample can be simulated on the concrete transition system,
then the algorithm terminates with a violation of the
property. Otherwise, it refines the abstraction, adding new predicates
to $\preds$ so that the abstract counterexample is no more a path of
the abstract system.
In the propagation phase, $\preds$-clauses of a
frame $F_i$ that are inductive relative to $F_i$ using
$\absfs{\preds}{T}$ are propagated to the following frame $F_{i+1}$.
As for \pdr, if two consecutive frames are equal, we can conclude that
the property is satisfied by the abstract transition system, and
therefore also by the concrete one.

\begin{figure}[t]
  \centering
    \begin{fmpage}{1.0\linewidth}
  \begin{small2}
    \newcounter{pseudocodecounter}
    \newcommand{\pcl}{%
      \refstepcounter{pseudocodecounter}
      {\scriptsize \arabic{pseudocodecounter}.}\hspace{1em}%
    }
    \newcommand{\pcll}[1]{\pcl\label{#1}} %
    \newcommand{\pcc}[1]{%
      {\it ~\# #1}%
    }
    \newcommand{\pcs}{%
      {\phantom {\scriptsize \arabic{pseudocodecounter}.}\hspace{1em}}
    }
    \newcommand{\absline}[1]{\textcolor{red}{#1}}
    \setcounter{pseudocodecounter}{0}
      \begin{tabbing}
        {\bf bool} \pdria($I$, $T$, $P$, $\preds$):\\
        \pcs xxx \= xxx \= xxx \= xxx \= xxx \= xxx \= \kill
        \pcl \absline{$\preds$ = $\preds \cup \{p \mid p$ is a predicate in $I$ or in $P\}$}\\
        \pcl trace = [$I$] \pcc{first elem of trace is init formula}\\
        \pcl trace.push() \pcc{add a new frame to the trace}\\
        \pcll{ic3-mainloop} {\bf while} True:\\
        \> \pcc{blocking phase}\\
        \pcll{ic3-blocking-begin} \> \absline{{\bf while} there exists a $\preds$-cube $c$ s.t. 
        $c \models \text{trace.last()} \land \neg P$:}\\        
        \pcl \> \> {\bf if not} recBlock($c, \text{trace.size()}-1$):\\
        \> \> \> \pcc{a pair $(s_0, 0)$ is generated} \\
        \pcl \> \> \> \absline{{\bf if} the simulation of $\pi=(s_0,0); \ldots; (s_k, k)$ fails:} \\
        \pcll{ic3-newpred} \> \> \> \> \absline{$\preds := \preds \cup \text{refine}(I, T, P, \preds, \pi)$}\\ 
        \pcll{ic3-blocking-end}\> \> \> {\bf else return} False \pcc{counterexample found}\\[0.5em]
        \> \pcc{propagation phase}\\
        \pcll{ic3-propagation-begin} \> trace.push()\\
        \pcl \> {\bf for} $i = 1$ {\bf to} $\text{trace.size()}-1$:\\
        \pcl \> \> {\bf for each} clause $c \in \text{trace[i]}$:\\
        \pcl \> \> \> \absline{{\bf if} $\absrelind(\text{trace[i]},T,c,\preds) \models \bot$:}\\
        \pcl \> \> \> \> add $c$ to trace[i+1] \\        
        \pcll{ic3-propagation-end} \> \> {\bf if} trace[i] == trace[i+1]: {\bf return} True \pcc{property proved}
      \end{tabbing}
    \setcounter{pseudocodecounter}{0}
      \begin{tabbing}
        \pcc{simplified recursive description, in practice based on priority queue \cite{bradley,fmcad-een}}\\
        {\bf bool} recBlock($s,i$):\\
        \pcs xxx \= xxx \= xxx \= xxx \= xxx \= xxx \= \kill
        \pcl {\bf if} $i == 0$: {\bf return} False \pcc{reached initial states}\\
        \pcl \absline{{\bf while} $\absrelind(\text{trace[i-1]},T,\neg s,\preds) \not\models \bot$:}\\
        \pcl \> \absline{extract a $\preds$-cube $c$ from the Boolean model of $\absrelind(\text{trace[i-1]},T,\neg s,\preds)$} \\
        \> \pcc{$c$ is an (abstract) predecessor of $s$} \\
        \pcl \> {\bf if not} recBlock($c,i-1$): {\bf return} False\\
        \pcl $g$ = generalize($\neg s$, i) \pcc{standard IC3 generalization \cite{bradley,fmcad-een} (using \absrelind)}\\
        \pcl add $g$ to trace[i]\\
        \pcl {\bf return} True\\
      \end{tabbing}
  \end{small2}
    \end{fmpage}
  \caption{High-level description of \pdria (with changes wrt. the Boolean IC3 in red).
    \label{fig:ic3ia-pseudocode}}
\end{figure}

\subsection{Simulation and refinement}
During the search the procedure may find a counterexample in the
abstract space. As usual in the CEGAR framework, we simulate the
counterexample in the concrete system to either find a real
counterexample or to refine the abstraction, adding new predicates to
$\preds$.
Technically, \pdria finds a set of counterexamples $\pi=(s_0,0);
\ldots; (s_k, k)$ instead of a single counterexample, as described
in~\cite{CGCAV12} (i.e. this behaviour depends on the
generalization of a cube performed by ternary simulation or don't care detection).
We simulate $\pi$ as usual via bounded model
checking. Formally, we encode all the paths of $S$ up to $k$ steps
restricted to $\pi$ with: $I(\vars^0) \wedge \bigwedge_{i < k}{T(\vars^i,
    \vars^{i+1})} \wedge P(\vars^k) \wedge \bigwedge_{i \le
    k}{s_k(\vars^k)}$.
If the formula is satisfiable, then there exists a concrete
counterexample that witnesses $S \not\models P$, otherwise $\pi$ is
spurious and we refine the abstraction adding new predicates.
The $\textit{refine}(I,T,\preds,\pi)$ procedure is orthogonal to
\pdria, and can be carried out with several techniques, like
interpolation, unsat core extraction or weakest
precondition, for which there is a wide literature. The only
requirement of the refinement is to remove the spurious
counterexamples $\pi$.
In our implementation we used interpolation to discover predicates,
similarly to~\cite{HJMM04}.

Also, note that in our approach the set of predicates increases
monotonically after a refinement (i.e. we always add new predicates to
the existing set of predicates). Thus, the transition relation is
monotonically strengthened (i.e. since $\preds \subseteq \preds'$,
$\absfs{\preds'}{T}_{\preds'} \rightarrow
\absfs{\preds}{T}_\preds$).
This allows us to \textit{keep all the clauses} in the \pdria frames
after a refinement, enabling a fully incremental approach.

\subsection{Correctness}

\begin{lemma}[Invariants]\label{lem-invariants}
  The following conditions are invariants of \pdria:
  \begin{compactenum}    
    \item
      \label{linv-itm:c1}
      ${F_0} = {I}$;
    \item
      \label{linv-itm:c2}
      for all $i<k$, ${F_i} \models {F}_{i+1}$;
    \item
      \label{linv-itm:c3}
      for all $i<k$, $F_i(\vars) \land \pequiv[\preds](\vars,\evars) \land
      T(\evars,\evars') \land \pequiv[\preds](\evars',\vars') \models
      F_{i+1}(\vars')$;
    \item
      \label{linv-itm:c4}
      for all $i<k$, ${F_i} \models {P}$.
  \end{compactenum}
\end{lemma}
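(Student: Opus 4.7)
The plan is to proceed by induction on the sequence of operations performed by \pdria, showing that the four invariants hold after initialization and are preserved by every step that modifies the trace or the predicate set $\preds$. The relevant operations are exactly four: (i) the \texttt{trace.push()} call in the main loop, (ii) the ``add $g$ to trace[i]'' step inside \texttt{recBlock} after a successful abstract relative-inductiveness check, (iii) the propagation step that moves a clause $c$ from trace[i] to trace[i+1], and (iv) the refinement step that extends $\preds$ with new predicates. For the base case, after lines 1--3 we have trace $= [I, \top]$, so $F_0 = I$ (invariant 1), $F_0 \models \top = F_1$ (invariant 2), invariant 3 holds since $F_1 = \top$, and invariant 4 for $i=0$ follows from the standing assumption that $I \models P$.

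First I would dispatch the two easy cases. A \texttt{push()} inserts a fresh frame $F_{k+1} = \top$: the new instances of invariants 2 and 3 are trivial, and the instance $F_k \models P$ of invariant 4 is guaranteed by the exit condition of the blocking loop (no $\preds$-cube of $F_k \land \neg P$ is left). Propagation adds a clause $c$ to $F_{i+1}$ precisely when $\absrelind(F_i,T,c,\preds)$ is unsatisfiable; by the preceding Theorem this is equivalent to $\relind(\absfs{\preds}{F_i},\absfs{\preds}{T},\absfs{\preds}{c})$ being unsatisfiable, which is exactly what invariant 3 requires for the new conjunct $c$ in $F_{i+1}$. Invariants 1, 2, and 4 are trivially preserved since only $F_{i+1}$ is strengthened and $c$ was already a conjunct of $F_i$.

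The \texttt{recBlock} step is the most substantial: the added clause $g$ generalizes $\neg s$ after checking $\absrelind(F_{i-1},T,\neg s,\preds) \models \bot$, so by the preceding Theorem we obtain the abstract relative inductiveness of $g$ with respect to $F_{i-1}$, which is precisely what invariant 3 demands when strengthening $F_i$ with $g$. Invariant 2 is maintained using the standard IC3 delta-encoding view, under which a clause stored at level $i$ is implicitly conjoined at every lower frame as well; this is sound because the base case $i = 0$ of \texttt{recBlock} returns False whenever $g$ would conflict with $I$, so $I \models g$ and hence $F_j \models g$ for all $j \le i$. For the refinement step, enlarging $\preds$ to $\preds' \supseteq \preds$ only strengthens $\pequiv$, i.e.\ $\pequiv[\preds'] \models \pequiv[\preds]$, so invariant 3 is preserved because strengthening a premise preserves entailment, and invariants 1, 2, 4 are untouched since the frames themselves are not modified and every $\preds$-formula is also a $\preds'$-formula.

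The main obstacle I expect is the delta-encoding bookkeeping for invariant 2 in the \texttt{recBlock} case: proving that ``add $g$ to trace[i]'' effectively places $g$ in every $F_j$ with $j \le i$ requires making explicit that $I \models g$, which in turn rests on the fact that \texttt{recBlock} aborts at $i = 0$ rather than silently adding a clause inconsistent with $I$. Once this convention is fixed, the remaining verifications are routine applications of the preceding Theorem and monotonicity of entailment.
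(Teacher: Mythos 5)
Your proposal is correct and follows essentially the same route as the paper's proof: a case analysis over the operations that modify the trace or $\preds$, using the unsatisfiability of $\absrelind$ to justify conditions \ref{linv-itm:c2}--\ref{linv-itm:c3} for blocking and propagation, the exit condition of the blocking loop for condition \ref{linv-itm:c4} when a new frame is pushed, and the monotonicity $\pequiv[\preds']\models\pequiv[\preds]$ for $\preds\subseteq\preds'$ to handle refinement. You are in fact more explicit than the paper (which dismisses these points with ``clearly'') about the delta-encoding and initiation bookkeeping; the only small imprecision is that $I\models g$ is really guaranteed by the initiation check built into the standard generalization procedure (together with $I\models P$ for the top-level bad cubes), not by the $i=0$ base case of \texttt{recBlock}, which concerns proof obligations reaching frame $0$.
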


\begin{proof}
Condition~\ref{linv-itm:c1} holds, since initially ${F_0} = {I}$, and
${F_0}$ is never changed.
We prove that the conditions (\ref{linv-itm:c2}-\ref{linv-itm:c4}) are
loop invariants for the main \pdria loop
(line~\ref{ic3-mainloop}).
The invariant conditions trivially hold when entering the loop.

Then, the invariants are preserved by the inner loop at
line~\ref{ic3-blocking-begin}.
The loop may change the content of a frame ${F}_{i+1}$ adding a new
clause $c$ while recursively blocking a cube $(p,i+1)$. $c$ is added
to ${F}_{i+1}$ if the abstract relative inductive check
$\absrelind(F_{i},T,c,\preds)$ holds. Clearly, this preserves the
conditions \ref{linv-itm:c2}-\ref{linv-itm:c3}. In the loop the set of
predicates $\preds$ may change at line \ref{ic3-newpred}. Note that
the invariant conditions still hold in this case.  In particular,
\ref{linv-itm:c3} holds because if $\preds\subseteq\preds'$, then
$\pequiv[\preds']\models\pequiv[\preds]$.
When the inner loop ends, we are guaranteed that ${F}_{k} \models
{P}_\preds$ holds. Thus, condition \ref{linv-itm:c4} is preserved when a
new frame is added to the abstraction in
line~\ref{ic3-propagation-begin}.
Finally, the propagation phase clearly maintains all the invariants
(\ref{linv-itm:c2}-\ref{linv-itm:c4}), by the definition of abstract
relative induction $\absrelind(F_{i},T,c,\preds')$.
\end{proof}

\begin{lemma}
  \label{lemma:soundforabs}
  If \pdria($I$, $T$, $P$, $\preds$) returns
  \true, then $\abs{S_\preds} \models \abs{P_\preds}$.
\end{lemma}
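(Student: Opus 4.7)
\medskip

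\noindent\textbf{Proof plan.} The return value \true\ can only be produced at line \ref{ic3-propagation-end}, so whenever \pdria returns \true\ there exists an index $i$ with $F_i = F_{i+1}$ in the current trace, and the current predicate set $\preds$ is that with which the algorithm terminated. My plan is to show that this common frame, viewed through the abstraction map, is an inductive invariant of $\abs{S}_\preds$ that implies $\abs{P}_\preds$, by reading off each of the three invariant conditions from Lemma~\ref{lem-invariants} and converting the implicit abstract check into a check in $\abs{S}_\preds$ via the theorem preceding the algorithm.

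First, I would observe that because of line~1 every atom of $I$ and of $P$ is in $\preds$, and because the algorithm only ever adds $\preds$-clauses to frames, every $F_j$ is a $\preds$-formula; hence $\abs{F_j}_\preds$, $\abs{I}_\preds$ and $\abs{P}_\preds$ are well defined and $\abs{\neg\varphi}_\preds = \neg\abs{\varphi}_\preds$ for each $\preds$-formula $\varphi$ (as noted at the start of Section~\ref{sec:implabsic3}). Then the initial condition $\abs{I}_\preds \models \abs{F_i}_\preds$ follows from Lemma~\ref{lem-invariants}(\ref{linv-itm:c1}) together with (\ref{linv-itm:c2}) (which gives $I = F_0 \models F_i$), and the safety condition $\abs{F_i}_\preds \models \abs{P}_\preds$ follows from Lemma~\ref{lem-invariants}(\ref{linv-itm:c4}).

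The main step is the abstract inductive step $\abs{F_i}_\preds \wedge \abs{T}_\preds \models \abs{F_i}_\preds'$. Here I would view $F_i$ as a conjunction of $\preds$-clauses and argue clause by clause. For every clause $c \in F_i$, Lemma~\ref{lem-invariants}(\ref{linv-itm:c3}) applied to the pair $F_i, F_{i+1}$ together with $F_i = F_{i+1}$ gives
\[
F_i(\vars) \wedge \pequiv[\preds](\vars,\evars) \wedge T(\evars,\evars') \wedge \pequiv[\preds](\evars',\vars') \models c(\vars'),
\]
i.e.\ $\absrelind(F_i,T,\neg(\neg c),\preds)$ is unsatisfiable. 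By the theorem preceding the algorithm (in the ``iff'' form), this is equivalent to the unsatisfiability of $\relind(\abs{F_i}_\preds, \abs{T}_\preds, \abs{c}_\preds)$, so $\abs{F_i}_\preds \wedge \abs{T}_\preds \models \abs{c}_\preds'$. Taking the conjunction over all clauses $c$ of $F_i$ yields $\abs{F_i}_\preds \wedge \abs{T}_\preds \models \abs{F_i}_\preds'$, as required.

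The step I expect to require the most care is this clause-by-clause reduction: Lemma~\ref{lem-invariants}(\ref{linv-itm:c3}) is stated for the whole frame formula, while the theorem is stated for a single $\preds$-clause, so I need to justify the obvious ``CNF-distributivity'' step and, in the converse direction, the identity $\abs{\neg c}_\preds = \neg \abs{c}_\preds$ that makes the unsatisfiability of $\relind(\abs{F_i}_\preds,\abs{T}_\preds,\abs{c}_\preds)$ equivalent to the entailment $\abs{F_i}_\preds \wedge \abs{T}_\preds \models \abs{c}_\preds'$. Once this is in place, combining the three items gives that $\abs{F_i}_\preds$ is an inductive invariant for $\abs{S}_\preds$ implying $\abs{P}_\preds$, whence $\abs{S}_\preds \models \abs{P}_\preds$.
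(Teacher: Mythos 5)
Your proof is correct and takes essentially the same route as the paper: both establish the abstract IC3 trace conditions (initiation, monotonicity, consecution, safety) for the abstract frames from Lemma~\ref{lem-invariants}, using that $I$, $P$ and the $F_j$ are $\preds$-formulas so that abstraction commutes with negation and entailment, and then conclude from the fixpoint $F_i = F_{i+1}$ that $\absfs{\preds}{F_i}$ is an inductive invariant implying $\absfs{\preds}{P}$. The only difference is in how the consecution condition is justified: the paper reads it off directly from Lemma~\ref{lem-invariants} and the identity $\absfs{\preds}{T}=\exists\evars,\evars'.\,\pequiv[\preds](\vars,\evars)\land T(\evars,\evars')\land\pequiv[\preds](\evars',\vars')$, whereas you derive it clause by clause through the $\absrelind$/$\relind$ equivalence theorem, which is a sound (and somewhat more explicit) variant of the same argument.
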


\begin{proof}
  The invariant conditions of the \pdr algorithm hold for the abstract frames:
1) $\absfs{\preds}{F_0} = \absfs{\preds}{I}$;
for all $i<k$, 2) $\absfs{\preds}{F_i} \models \absfs{\preds}{F_{i+1}}$;
3) $\absfs{\preds}{F_i} \land \absfs{\preds}{T} \models \absfs{\preds}{F'_{i+1}}$;
and 4) $\absfs{\preds}{F_i} \models \absfs{\preds}{P}$.

Conditions~1), 2), and 4) follow from
Lemma \ref{lem-invariants}, since $I$, $P$, and $F_i$ are
$\preds$-cubes. Condition~3) follows from Lemma
\ref{lem-invariants}, since $\absfs{\preds}{T}=\exists\evars,\evars'.\pequiv[\preds](\vars,\evars) \land
T(\evars,\evars') \land \pequiv[\preds](\evars',\vars')$ by definition.

By assumption \pdria returns \true and thus $\absfs{\preds}{F_{k-1}} =
\absfs{\preds}{F_k}$. Since the conditions (1-4)
hold, we have that $\absfs{\preds}{F_{k-1}}$ is an inductive invariant
that proves $\absfs{\preds}{S} \models \absfs{\preds}{P}$.
\end{proof}

\begin{theorem}[Soundness]
Let $S=\mktuple{\vars,I,T}$ be a transition system and $P$ a safety
property and $\preds$ be a set of predicates over $\vars$.
The result of \pdria($I$, $T$, $P$, $\preds$) is correct.
\end{theorem}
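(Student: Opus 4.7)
The plan is to split the argument into the two possible outcomes of \pdria and treat them separately, using the two lemmas already established.

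First I would dispatch the case where \pdria returns \true. By Lemma~\ref{lemma:soundforabs}, we immediately have $\absfs{\preds}{S}\models\absfs{\preds}{P}$. The step that remains is the standard soundness of predicate abstraction: any concrete path $s_0,\ldots,s_k$ of $S$ projects via $H_\preds$ to an abstract path $\absfs{\preds}{s_0},\ldots,\absfs{\preds}{s_k}$ of $\absfs{\preds}{S}$, because $I\models\absfs{\preds}{I}\circ H_\preds$ and $T\models\absfs{\preds}{T}\circ(H_\preds\otimes H_\preds)$ by construction of the existentially-quantified abstraction. Since every abstract state reached satisfies $\absfs{\preds}{P}$, and $s_i\models\absfs{\preds}{P}\circ H_\preds$ implies $s_i\models P$ (because $P$ is built from predicates in $\preds$, as enforced on line~1 of the pseudocode), we conclude $S\models P$.

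Next I would handle the case where \pdria returns \false. This can happen only at line~\ref{ic3-blocking-end}, which is reached when \texttt{recBlock} returned \false (so a pair $(s_0,0)$ was produced) \emph{and} the subsequent concrete simulation of $\pi=(s_0,0);\ldots;(s_k,k)$ succeeded. By the construction of the BMC formula used for simulation, satisfiability yields an assignment to $\vars^0,\ldots,\vars^k$ that witnesses a concrete path $s_0,\ldots,s_k$ of $S$ with $s_k\models\neg P$. Hence $S\not\models P$.

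The step I expect to need the most care is the first one: justifying that the abstract invariant delivered by Lemma~\ref{lemma:soundforabs} transfers back to the concrete system. The argument is routine for predicate abstraction, but one has to be explicit that the initial enlargement of $\preds$ on line~1 (including all atoms of $I$ and $P$) ensures $\absfs{\preds}{I}$ and $\absfs{\preds}{P}$ are exact characterizations of $I$ and $P$ modulo $H_\preds$, so that simulation in the abstract system and the relation $H_\preds$ between concrete and abstract states let us pull the abstract invariant back to $S\models P$. The \false branch, by contrast, is immediate from the semantics of the BMC encoding.
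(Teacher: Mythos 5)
Your proposal is correct and follows essentially the same route as the paper: case split on the return value, with the \true case discharged by Lemma~\ref{lemma:soundforabs} plus the standard soundness of (over-approximating) predicate abstraction, and the \false case discharged by the satisfiability of the concrete BMC simulation formula. The only difference is that you spell out the abstraction-transfer step (including why line~1's inclusion of the predicates of $I$ and $P$ in $\preds$ makes $\absfs{\preds}{P}$ precise), which the paper leaves implicit in ``and thus $S \models P$.''
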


\begin{proof}
If \pdria($I$, $T$, $P$, $\preds$) returns \true, then
$\abs{S_\preds} \models \abs{P_\preds}$ by
Lemma~\ref{lemma:soundforabs}, and thus $S \models P$.
If \pdria($I$, $T$, $P$, $\preds$) returns \false, then the
simulation of the abstract counterexample in the concrete system
succeeded, and thus $S \not \models P$.
\end{proof}

\begin{lemma}[Abstract counterexample]\label{lem-abscex}
If \pdria finds a counterexample $\pi$, then $\absfs{\preds}{\pi}$ is a
path of $\absfs{\preds}{S}$ violating $\absfs{\preds}{P}$.
\end{lemma}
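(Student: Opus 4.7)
The plan is to unfold how the counterexample $\pi = (s_0, 0); \ldots; (s_k, k)$ is built up inside \pdria and then invoke Theorem~1 to lift each step of the chain into the abstract space. I would check three things: that $\absfs{\preds}{s_0}$ contains an initial abstract state, that $\absfs{\preds}{s_k}$ satisfies $\neg \absfs{\preds}{P}$, and that every consecutive pair $(\absfs{\preds}{s_i}, \absfs{\preds}{s_{i+1}})$ is a legal abstract transition of $\absfs{\preds}{S}$.

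The top of the chain is immediate: $s_k$ is the $\preds$-cube chosen in the outer blocking loop (line~\ref{ic3-blocking-begin}), so $s_k \models F_k \wedge \neg P$, and since both $s_k$ and $P$ are $\preds$-formulas the observation from the start of Section~\ref{sec:implabsic3} that $\absfs{\preds}{\neg \phi} = \neg\absfs{\preds}{\phi}$ yields $\absfs{\preds}{s_k} \models \neg\absfs{\preds}{P}$. For an intermediate step $0 \le i < k$, the pair $(s_i, i)$ was pushed on the call stack by \texttt{recBlock}$(s_{i+1}, i+1)$ after discovering a model $t$ of $\absrelind(F_i, T, \neg s_{i+1}, \preds)$ and extracting a $\preds$-cube $s_i$ from it. Theorem~1 then applies directly to $t$: $\absfs{\preds}{t} \models \relind(\absfs{\preds}{F_i}, \absfs{\preds}{T}, \absfs{\preds}{\neg s_{i+1}})$, so $\absfs{\preds}{t}$ witnesses an abstract $T$-transition from an abstract state in $\absfs{\preds}{s_i}$ to one in $\absfs{\preds}{s_{i+1}}$.

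The base case $i = 0$ uses the same pattern together with Lemma~\ref{lem-invariants}: the predecessor $s_0$ was extracted from a model $t$ of $\absrelind(F_0, T, \neg s_1, \preds)$, and since $F_0 = I$ the assignment $t$ satisfies $I(X) \wedge s_0(X) \wedge \pequiv(X,\evars) \wedge T(\evars,\evars') \wedge \pequiv(\evars',X') \wedge \neg s_1(X')$. Abstracting then gives $\absfs{\preds}{t} \models \absfs{\preds}{I} \wedge \absfs{\preds}{s_0}$, so $\absfs{\preds}{s_0}$ meets an initial abstract state. Composing the three pieces produces the desired abstract path from $\absfs{\preds}{s_0}$ to $\absfs{\preds}{s_k}$ violating $\absfs{\preds}{P}$.

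The only mildly delicate point I anticipate is the identification of $\absfs{\preds}{s_i}$ with the abstract projection of the satisfying assignment $t$ from which it was extracted: this works because $s_i$ is the complete $\preds$-cube fixing each $x_p$ to the value of $p(X)$ under $t$, and because $\pequiv$ forces $X$ and $\evars$ to agree on every $p \in \preds$. Once that identification is stated explicitly, Theorem~1 carries all the theory-level content and the argument reduces to stitching the three observations above together.
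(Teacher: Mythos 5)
Your proof is correct and follows essentially the same route as the paper: it lifts the initial condition, the $\target$ condition, and the step-wise connections between consecutive cubes $s_i$ into the abstract space. The only noteworthy difference is that you justify each abstract transition by applying the theorem relating $\absrelind$ to $\relind$ to the satisfying assignment from which the predecessor cube was extracted, whereas the paper phrases that step via the invariants of Lemma~\ref{lem-invariants}; your justification is, if anything, the more direct reading of what the blocking phase actually guarantees.
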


\begin{proof}
For all $i$, $0\leq i\leq$ trace.size, if $\pi[i]=(s_i,i)$ then $s_i$
is a $\preds$-cube satisfying $F_i$. Moreover, $s_k\models\neg P$. By
Lemma \ref{lem-invariants}, $F_0=I$ and therefore $s_0\models
I$. Since $s_0$, $s_k$, $I$, and $P$ are $\preds$-formulas,
$\absfs{\preds}{s_0}\models \absfs{\preds}{I}$ and
$\absfs{\preds}{s_k}\models \neg\absfs{\preds}{P}$. Again by Lemma
\ref{lem-invariants}, for all $i$, $F_i(\vars) \land
\pequiv[\preds](\vars,\evars) \land T(\evars,\evars') \land
\pequiv[\preds](\evars',\vars') \models F_{i+1}(\vars')$, and thus
$s_i\wedge s_{i+1}'\models \exists
\evars,\evars'.\pequiv[\preds](\vars,\evars) \land T(\evars,\evars')
\land \pequiv[\preds](\evars',\vars')$.  Therefore,
$\absfs{\preds}{s_i}\wedge\absfs{\preds}{s_{i+1}}'\models
\absfs{\preds}{T}$.
\end{proof}

\begin{theorem}[Relative completeness]
Suppose that for some set $\preds$ of predicates,
$\absfs{\preds}{S}\models \absfs{\preds}{P}$.  If, at a certain
iteration of the main loop, \pdria has $\preds$ as set of predicates,
then $\pdria$ returns true.
\end{theorem}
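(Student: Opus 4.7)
The plan is to reduce to the standard termination and completeness argument for finite-state Boolean \pdr applied to the abstract system $\absfs{\preds}{S}$, using the equivalence of relative-induction queries proved above together with Lemma~\ref{lem-abscex} to rule out any counterexample or refinement step.

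First I would argue that, from the moment \pdria enters the main loop with predicate set $\preds$, the set of predicates stays frozen. The only place where $\preds$ grows is line~\ref{ic3-newpred}, which is executed only after \texttt{recBlock} has returned false at the top level, yielding a counterexample $\pi=(s_0,0);\ldots;(s_k,k)$. By Lemma~\ref{lem-abscex}, such a $\pi$ gives rise to a path $\absfs{\preds}{\pi}$ of $\absfs{\preds}{S}$ violating $\absfs{\preds}{P}$, contradicting the hypothesis $\absfs{\preds}{S}\models \absfs{\preds}{P}$. Hence no counterexample is ever produced, the simulation/refinement branch is never entered, and the algorithm neither returns false nor enlarges $\preds$.

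Next I would appeal to the theorem relating $\absrelind(F,T,c,\preds)$ to $\relind(\absfs{\preds}{F},\absfs{\preds}{T},\absfs{\preds}{c})$ to argue that, with $\preds$ fixed, every step of \pdria (cube selection from the last frame, recursive blocking, clause generalisation, and clause propagation) corresponds exactly to a legitimate step of the original Boolean \pdr run on the finite system $\absfs{\preds}{S}$ with property $\absfs{\preds}{P}$. Since the abstract state space is finite, there are only finitely many $\preds$-clauses up to equivalence; the frames are monotonically strengthened and bounded below by the set of unreachable abstract states, so the sequence must stabilise, making two consecutive frames coincide during some propagation phase. At that point \pdria returns true.

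The main obstacle is the second step: one has to verify carefully that each concrete action of \pdria does mirror a valid action of Boolean \pdr on $\absfs{\preds}{S}$, in particular that the abstract predecessors extracted from models of $\absrelind$ and the generalisations computed using $\absrelind$ are exactly those that finite-state IC3 would produce on $\absfs{\preds}{S}$. Once this correspondence is established, termination follows from the standard finite-state \pdr completeness result (see \cite{bradley,fmcad-een}), and the returned value must be true because false has been excluded by Lemma~\ref{lem-abscex}.
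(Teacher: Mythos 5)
Your first step coincides with the paper's own argument: by Lemma~\ref{lem-abscex}, any counterexample $\pi$ produced by \pdria would yield a path of $\absfs{\preds}{S}$ violating $\absfs{\preds}{P}$, contradicting the hypothesis; hence no counterexample arises, the algorithm never returns false, and $\preds$ is never enlarged. The divergence, and the gap, lies in your termination argument. You reduce to ``standard finite-state \pdr completeness'' via a claimed step-by-step correspondence between \pdria with predicate set $\preds$ and Boolean \pdr run on an explicitly computed $\absfs{\preds}{S}$, and you yourself flag as the main obstacle the claim that the predecessors extracted from models of $\absrelind(F,T,c,\preds)$ and the generalizations computed with it are \emph{exactly} those a Boolean \pdr run would produce---but you never discharge it. The theorem relating $\absrelind(F,T,c,\preds)$ to $\relind(\absfs{\preds}{F},\absfs{\preds}{T},\absfs{\preds}{c})$ only gives equisatisfiability of the queries and that models project to abstract models; it does not pin down which cube the solver returns nor which generalization is computed, so the exact-mirroring claim is both unproven and stronger than needed (IC3 implementations make different nondeterministic choices, and completeness should not hinge on matching one particular run).

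The paper avoids this entirely: it needs no simulation of Boolean \pdr. It observes that every iteration of the main loop that does not produce a counterexample strengthens some frame with a new $\preds$-clause, that the number of $\preds$-clauses is finite, and that by Lemma~\ref{lem-invariants} the frames satisfy $F_i\models F_{i+1}$; hence two consecutive frames must eventually coincide and the algorithm returns true in the propagation phase. Your proposal can be repaired by replacing the correspondence argument with this direct counting argument (or by proving the correspondence only up to the properties actually used, namely the invariants of Lemma~\ref{lem-invariants}), but as written the central step of your second half is missing.
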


\begin{proof}
Let us consider the case in which, at a certain iteration of the main
loop, $\preds$ is as defined in the premises of theorem. At every
following iteration of the loop, \pdria either finds an abstract counterexample
$\pi$ or strengthens a frame $F_i$ with a new $\preds$-clause. The
first case is not possible, since, by Lemma~\ref{lem-abscex},
$\absfs{\preds}{\pi}$ would be a path of $\absfs{\preds}{S}$ violating
the property. Therefore, at every iteration, \pdria strengthens some
frame with a new $\preds$-clause. Since the number of $\preds$-clauses
is finite and, by Lemma \ref{lem-invariants}, for all $i$, $F_i\models
F_{i+1}$, \pdria will eventually find that $F_i=F_{i+1}$ for some $i$
and return true.
\end{proof}

\section{Related Work}
\label{sec:related-work}

This work combines two lines of research in verification, abstraction
and \pdr.

Among the existing abstraction techniques, predicate
abstraction~\cite{GS97} has been successfully applied to the verification
of infinite-state transition systems, such as
software~\cite{mcmillan}.
Implicit abstraction~\cite{fm09} was first used with k-induction to
avoid the explicit computation of the abstract system. 
In our work, we exploit implicit abstraction in \pdr to avoid
theory-specific generalization techniques, widening the applicability
of \pdr to transition systems expressed over some background
theories. Moreover, we provided the first integration of implicit
abstraction in a CEGAR loop.

The \pdr~\cite{bradley} algorithm has been widely applied to the
hardware domain~\cite{fmcad-een,fmcad-chokler} to prove safety and
also as a backend to prove liveness~\cite{fmcad-liveness}.
In~\cite{VizelGS12}, \pdr is combined with a lazy abstraction technique in the context of hardware verification.
The approach has some similarities with our work, 
but it is limited to Boolean systems, it uses a ``visible variables'' abstraction rather than PA,
and applies a modified concrete version of \pdr for refinement.

Several approaches adapted the original \pdr algorithm to deal with
infinite-state
systems~\cite{CGCAV12,HoderB12,DBLP:conf/formats/KindermannJN12,DBLP:conf/date/WelpK13}.
The techniques presented in \cite{CGCAV12,HoderB12} extend \pdr to
verify systems described in the linear real arithmetic theory.
In contrast to both approaches, we do not rely on theory specific
generalization procedures, which may be expensive, such as quantifier
elimination~\cite{CGCAV12} or may hinder some of the \pdr features,
like generalization (e.g. the interpolant-based generalization
of~\cite{HoderB12} does not exploit relative induction). Moreover,
\pdria searches for a proof in the abstract space.
The approach presented in~\cite{DBLP:conf/formats/KindermannJN12} is
restricted to timed automata since it exploits the finite partitioning
of the region graph. While we could restrict the set of predicates
that we use to regions, our technique is applicable to a much broader
class of systems, and it also allows us to apply conservative
abstractions.
\pdr was also extended to the bit-vector theory
in~\cite{DBLP:conf/date/WelpK13} with an ad-hoc extension, that may
not handle efficiently some bit-vector operators. Instead, our
approach is not specific for bit-vector.

\section{Experimental Evaluation}
\label{sec:expeval}

We have implemented the algorithm described in the previous section
in the SMT extension of IC3 presented in \cite{CGCAV12}.
The tool uses \mathsat~\cite{mathsat5} as backend SMT solver,
and takes as input either a symbolic transition system or 
a system with an explicit control-flow graph (CFG),
in the latter case invoking a specialized ``CFG-aware'' variant of IC3 (TreeIC3, also described in \cite{CGCAV12}).
The discovery of new predicates for abstraction refinement is performed 
using the interpolation procedures implemented in \mathsat,
following \cite{HJMM04}.
In this section, we experimentally evaluate the effectiveness of our new technique.
We will call our implementation of the various algorithms as follows:
\begin{itemize}
\item 
\concpdr is the ``concrete'' IC3 extension for Linear Rational Arithmetic (\lra) as presented in \cite{CGCAV12};
\item 
\conctreepdr is the CFG-based variant of \cite{CGCAV12}, 
  also working only over \lra, 
  and exploiting interpolants whenever possible\footnote{See \cite{CGCAV12} for more details.};
\item 
\abspdr{\T} is IC3 with Implicit Abstraction for an arbitrary theory \T;
\item 
\abstreepdr{\T} is the CFG-based IC3 with Implicit Abstraction for an arbitrary theory \T.
\end{itemize}

All the experiments have been performed on a cluster 
of 64-bit Linux machines with a 2.7 Ghz Intel Xeon X5650 CPU,
with a memory limit set to 3Gb and a time limit of 1200 seconds (unless otherwise specified).
The tools and benchmarks used in the experiments are available at 
\url{https://es.fbk.eu/people/griggio/papers/tacas14-ic3ia.tar.bz2}.

\subsection{Performance Benefits of Implicit Abstraction}
\label{sec:expeval:performance}

\begin{figure}[t!]
  \hspace{-2ex}
  \begin{tabular}{cc@{\hspace{1em}}cc}
    \rotatebox{90}{\hspace{6em}\abspdr{\lra}} &
    \includegraphics[scale=0.55]{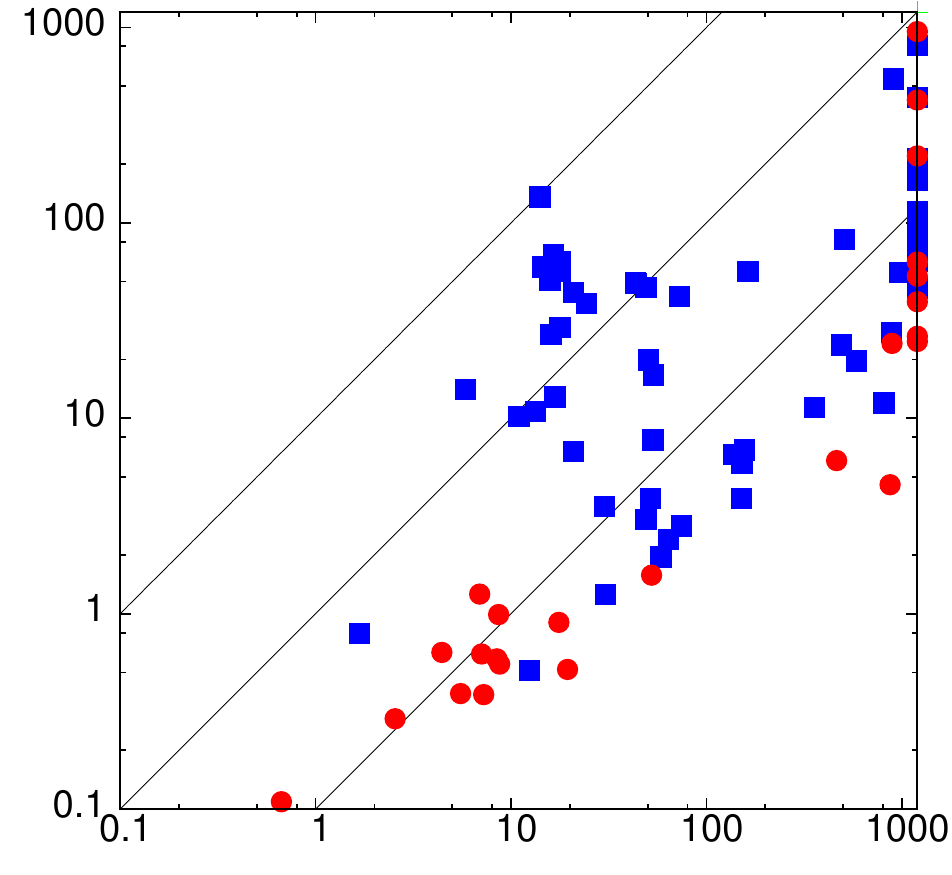} &
    \rotatebox{90}{\hspace{4em}\abstreepdr{\lra}} &
    \includegraphics[scale=0.55]{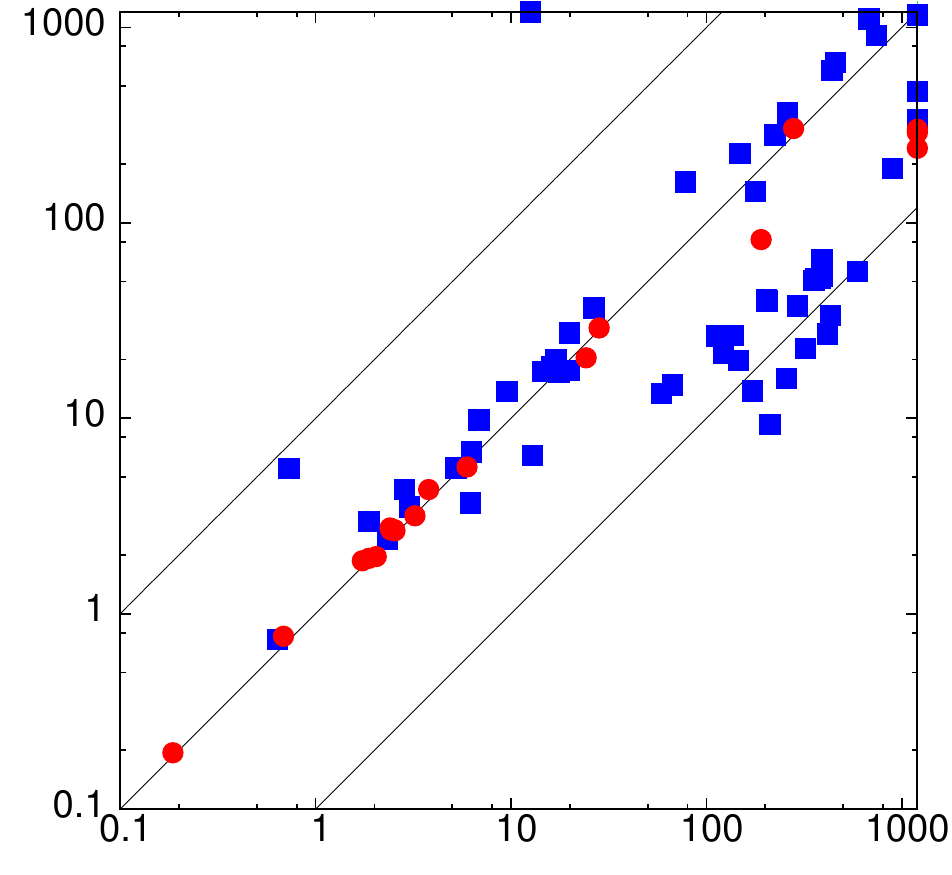} \\
    & \concpdr & & \conctreepdr \\[1em]
    \multicolumn{2}{c}{
    \includegraphics[scale=0.95]{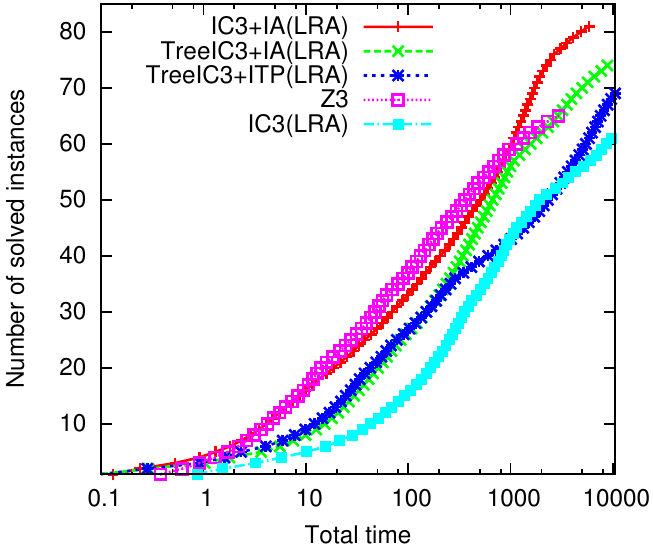}
    } &
    \multicolumn{2}{c}{
      \begin{tabular}[b]{l@{\hspace{1em}}rr}
        & & \\
        \hline
        \multicolumn{2}{l}{\bf Algorithm/Tool \phantom{X}\hfill \# solved} & {\bf Tot time} \\
        \hline
        \abspdr{\lra}      &            82   &       5836     \\
        \abstreepdr{\lra}  &            75   &       8825     \\
        \conctreepdr       &            70   &      10478     \\
        \zthree            &            66   &       2923     \\
        \concpdr           &            62   &       9637     \\
        \hline
        & & \\
        & & \\
        & & \\
        & & \\
        & & \\
        & & \\
        & & \\
      \end{tabular}
    }
    \\
  \end{tabular}
  \caption{Experimental results on \lra benchmarks from \cite{CGCAV12}.  \label{fig:expeval-lra}}
\end{figure}

\sloppypar
In the first part of our experiments,
we evaluate the impact of Implicit Abstraction for the performance of IC3 modulo theories.
In order to do so, we compare \abspdr{\lra} and \abstreepdr{\lra} against \concpdr and \conctreepdr
on the same set of benchmarks used in \cite{CGCAV12},
expressed in the \lra theory.
We also compare both variants against the SMT extension of IC3 for \lra 
presented in \cite{HoderB12} and implemented in the \zthree SMT solver.%
\footnote{We used the Git revision {\tt 3d910028bf} of \zthree.}

The results are reported in Figure~\ref{fig:expeval-lra}.
In the scatter plots at the top, safe instances are shown as blue squares, 
and unsafe ones as red circles.
The plot at the bottom reports the number of solved instances and the total accumulated execution time for each tool.
From the results, we can clearly see that using abstraction has a very significant positive impact on performance.
This is true for both the fully symbolic and the CFG-based IC3,
but it is particularly important in the fully symbolic case:
not only \abspdr{\lra} solves 20 more instances than \concpdr,
but it is also more than one order of magnitude faster in many cases,
and there is no instance that \concpdr can solve but \abspdr{\lra} can't.
In fact, Implicit Abstraction is so effective for these benchmarks that \abspdr{\lra} outperforms also \abstreepdr{\lra}, 
even though \concpdr is significantly less efficient than \conctreepdr.
One of the reasons for the smaller performance gain obtained in the CFG-based algorithm 
might be that \conctreepdr already tries to avoid expensive quantifier elimination operations whenever possible,
by populating the frames with clauses extracted from interpolants,
and falling back to quantifier elimination only when this fails (see \cite{CGCAV12} for details).
Therefore, in many cases \conctreepdr and \abstreepdr{\lra} end up computing very similar sets of clauses.
However, implicit abstraction still helps significantly in many instances, 
and there is only one problem that is solved by \conctreepdr but not by \abstreepdr{\lra}.
Moreover, both abstraction-based algorithms outperform all the other ones, including \zthree.

We also tried a traditional CEGAR approach
based on explicit predicate abstraction, using a bit-level IC3 
as model checking algorithm and the same interpolation procedure of \abspdr{\lra} for refinement.
As we expected, this configuration ran out of time or memory on most of the instances,
and was able to solve only 10 of them.

Finally, we did a comparison with a variant of IC3
specific for timed automata,
\atmoc~\cite{DBLP:conf/formats/KindermannJN12}.
We randomly selected a subset of the properties provided with \atmoc,
ignoring the trivial ones (i.e. properties that are 1-step inductive or 
with a counterexample of length $< 3$). In this case, the comparison
is still preliminary and it should be extend it with more properties
and case studies.
The results are reported in Figure~\ref{fig:expeval-ta}.
\abspdr{\lra} performs very well also in this case, 
solving
100 instances in 772 seconds,
 while \atmoc solved 41 instances in 3953 seconds (\zthree and
\concpdr solved 100 instances in 1535 seconds and 46 instances in 3347
seconds respectively).
\begin{figure}[h!]
  \hspace{-2ex}
  \begin{tabular}{c@{\hspace{3em}}cc}
    \includegraphics[scale=0.9]{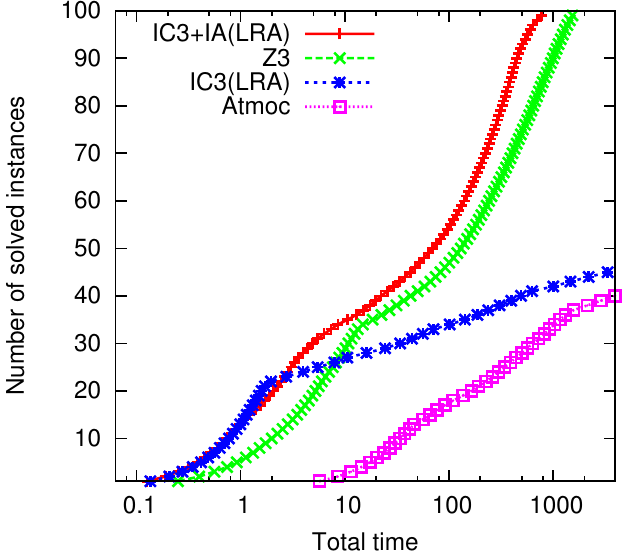} &
    \rotatebox{90}{\hspace{6em}\abspdr{\lra}} &
    \includegraphics[scale=0.55]{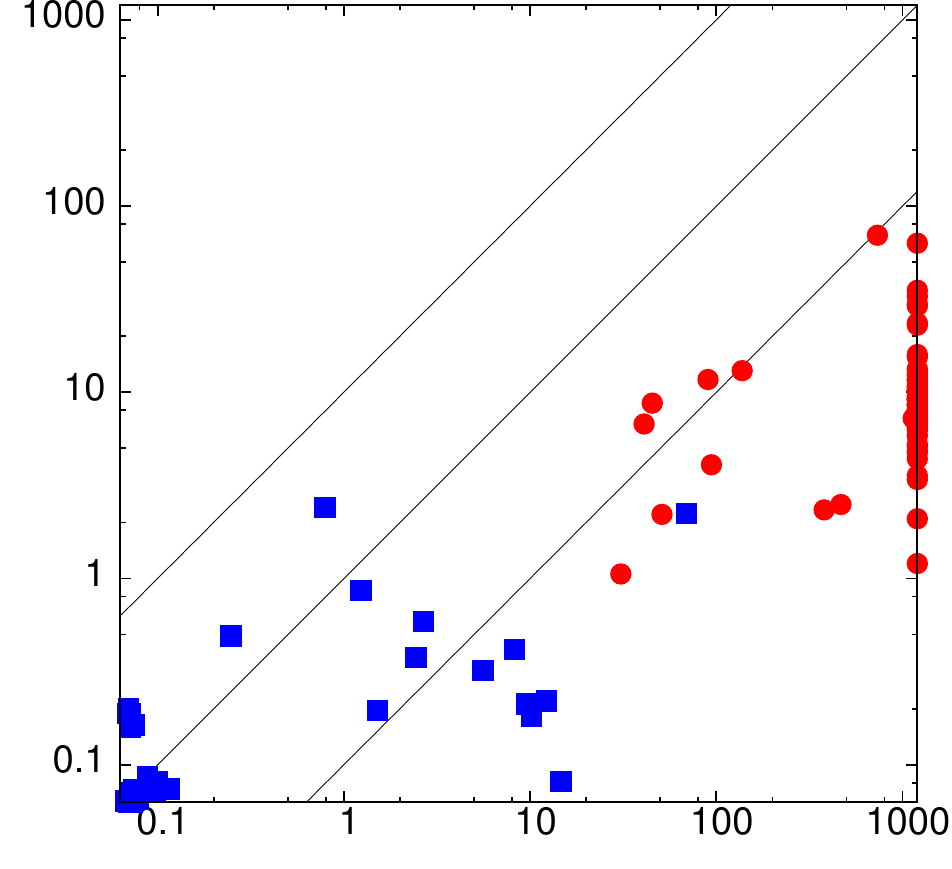} \\
    & & \concpdr
  \end{tabular}
  \caption{Preliminary experimental results on timed automata benchmarks.\label{fig:expeval-ta}}
\end{figure}

\subsubsection{Impact of number of predicates}
\label{sec:expeval:minpreds}

The refinement step may introduce more predicates than
those actually needed to rule out a spurious counterexample
(e.g. the interpolation-based refinement adds all the predicates found
in the interpolant).
In principle, such redundant predicates might significantly hurt performance.
Using the implicit abstraction framework, however,
we can easily implement a procedure that identifies and removes (a subset of)
redundant predicates after each successful refinement step.
Suppose that $\pdria$ finds a spurious
counterexample trace $\pi=(s_0,0); \ldots; (s_k, k)$ with the
set of predicates $\preds$, and that
$\textit{refine}(I,T,\preds,\pi)$ finds a set $\preds_{n}$ of new predicates.
The reduction procedure exploits the encoding of the set of paths of the
abstract system $S_{\preds \cup \preds_n}$ up to $k$ steps,
$\bmc{k}{\preds \cup \preds_n}$.
If $\preds \cup \preds_n$ are sufficient to rule out the spurious
counterexample, $\bmc{k}{\preds \cup \preds_n}$ is
unsatisfiable. We ask the SMT solver to compute the unsatisfiable
core of $\bmc{k}{\preds \cup \preds_n}$, and we keep only the
predicates of $\preds_n$ that appear in the unsatisfiable core.

In order to evaluate the effectiveness of this simple approach,
we compare two versions of \abspdr{\lra} with and without the
reduction procedure.
The results are shown in the scatter plots in Figure~\ref{fig:expeval-predmin},
both in terms of total number of predicates generated (left) and of execution time (right).
Perhaps surprisingly, although the reduction procedure 
is almost always effective in reducing the total number of predicates
\footnote{Sometimes the number of predicate increases. This is not strange, because e.g. it might happen that a predicate that is redundant 
for the current counterexample might become necessary later, and removing it could actually harm in such cases.},
the effects on the execution time are not very big.
Although redundancy removal seems to improve performance
for the more difficult instances, 
overall the two versions of \abspdr{\lra} solve the same number of problems.
However, this shows that the algorithm is much less sensitive to the number of predicates added than approaches based on an explicit computation of the abstract transition relation e.g. via All-SMT,
which often show also in practice (and not just in theory) 
an exponential increase in run time with the addition of new predicates.
\abspdr{\lra} manages to solve problems for which it discovers several
hundreds of predicates, reaching the peak of 800 predicates and solving most of
safe instances with more than a hundred predicates. These numbers are typically way out
of reach for explicit abstraction techniques, which blow up with a few dozen
predicates.
\begin{figure}[h!]
  \hspace{-2ex}
  \begin{tabular}{cc@{\hspace{1em}}cc}
    & Number of predicates & & Execution time \\
    \rotatebox{90}{\hspace{6em}\abspdr{\lra}} &
    \includegraphics[scale=0.55]{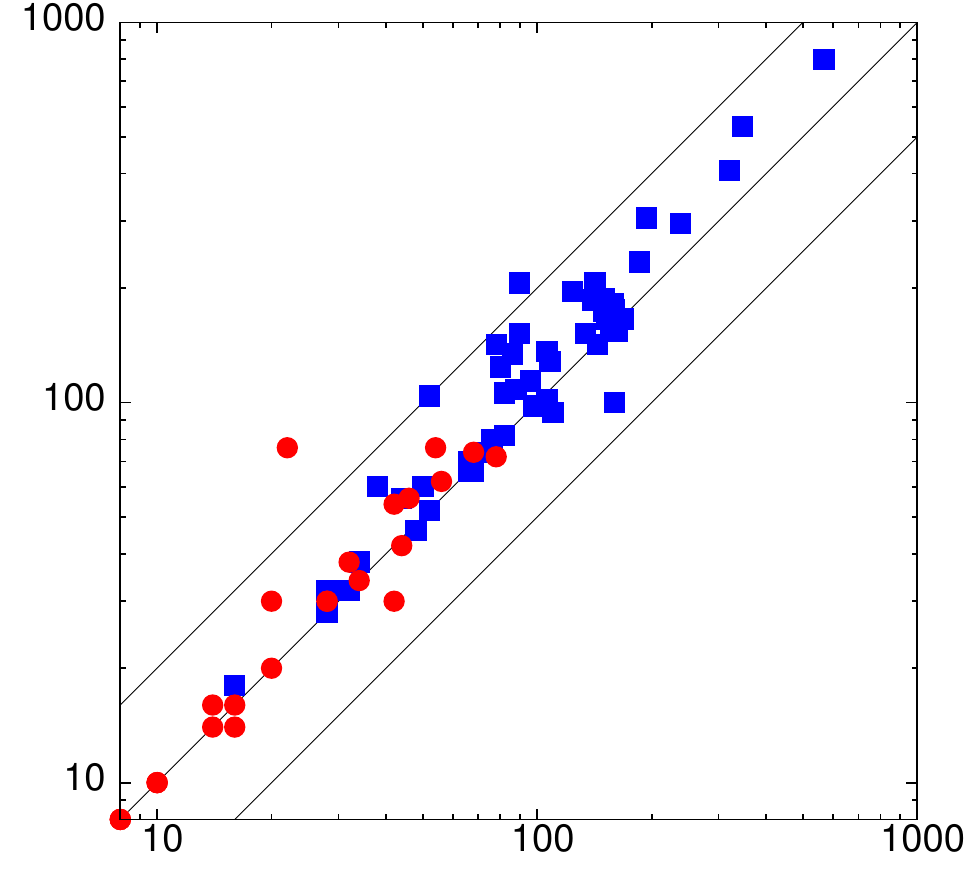} &
    \rotatebox{90}{\hspace{6em}\abspdr{\lra}} &
    \includegraphics[scale=0.55]{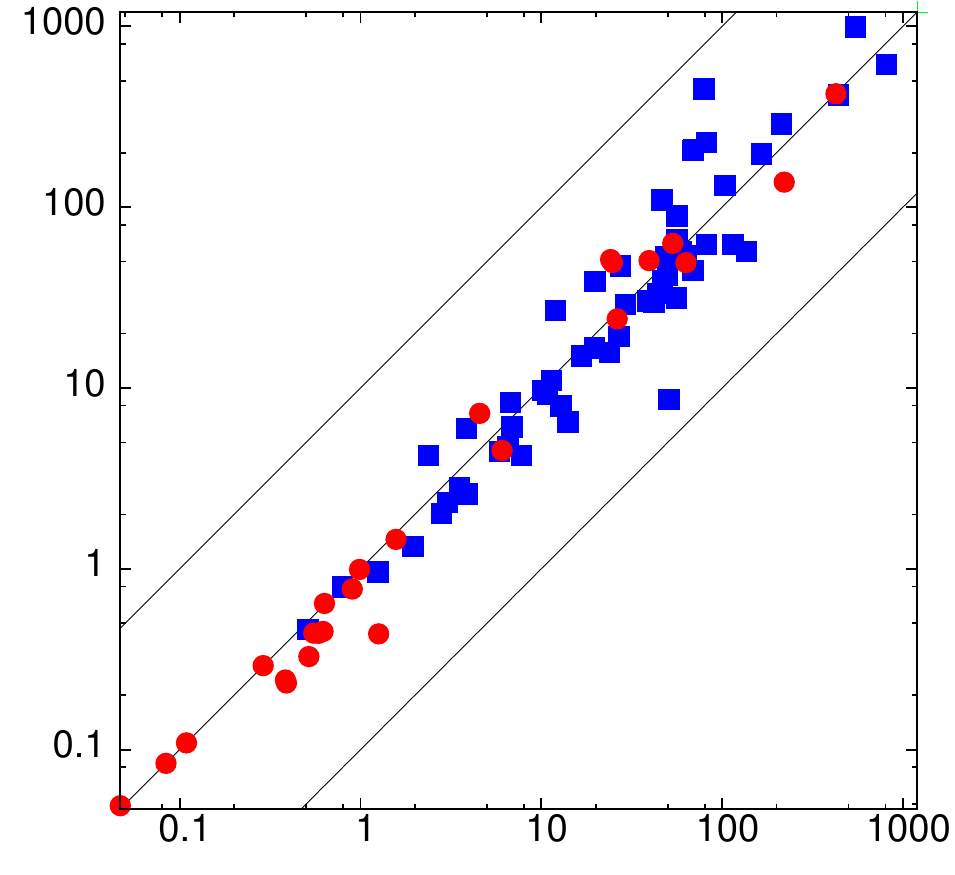} \\
    & \abspdr{\lra} with pred. reduction & & \abspdr{\lra} with pred. reduction
  \end{tabular}
  \caption{Effects of predicate minimization on \abspdr{\lra}.  \label{fig:expeval-predmin}}
\end{figure}


\subsection{Expressiveness Benefits of Implicit Abstraction}
\label{sec:expeval:expressiveness}

In the second part of our experimental analysis,
we evaluate the effectiveness of Implicit Abstraction 
as a way of applying IC3
to systems that are not supported by the methods of \cite{CGCAV12},
by instantiating \abspdr{\T} (and \abstreepdr{\T}) 
over the theories of Linear Integer Arithmetic~(\lia) 
and of fixed-size bit-vectors~(\bv).

\vspace{-1em}
\subsubsection{IC3 for \bv}
\label{sec:expeval:bv}

\begin{figure}[t!]
  \vspace{-1em}
  \hspace{-2ex}
  \begin{tabular}{cc}
    \includegraphics[scale=0.95]{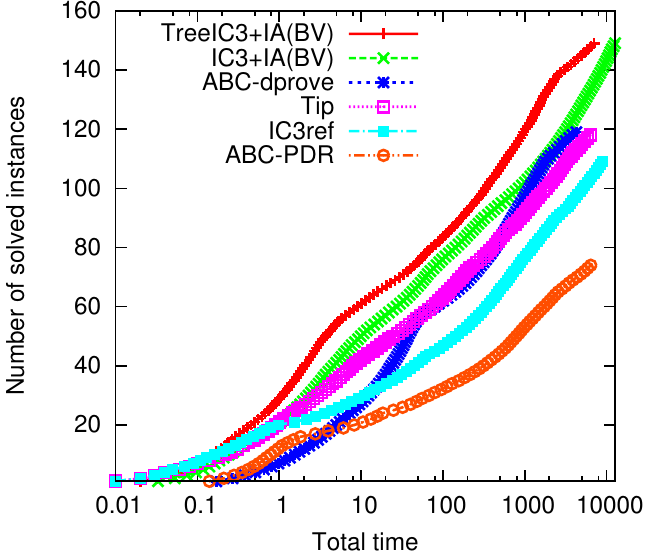}
    &
    \begin{tabular}[b]{l@{\hspace{1em}}rr}
      & & \\
      \hline
      \multicolumn{2}{l}{\bf Algorithm/Tool \phantom{X}\hfill \# solved} & {\bf Tot time} \\
      \hline
      \abstreepdr{\bv}   &           150   &       7056     \\
      \abspdr{\bv}       &           150   &      12753     \\
      \abcdprove         &           120   &       4298     \\
      \tip               &           119   &       6361     \\
      \icthree           &           110   &       9041     \\
      \abcpdr            &            75   &       6447     \\
      \hline
      & & \\
      & & \\
      & & \\
      & & \\
      & & \\
      & & \\
    \end{tabular}
    \\
  \end{tabular}
  \caption{Experimental results on \bv benchmarks from software verification. \label{fig:expeval-bv}}
\end{figure}

For evaluating the performance of \abspdr{\bv} and \abstreepdr{\bv},
we have collected over 200 benchmark instances from the domain of software verification. 
More specifically, the benchmark set consists of:
\begin{itemize}
\item 
all the benchmarks used in \S\ref{sec:expeval:performance}, but using \bv instead of \lra as background theory;
\item 
the instances of the {\tt bitvector} set of the Software Verification Competition SV-COMP~\cite{swmcc};
\item 
the instances from the test suite of InvGen~\cite{invgen}, a subset of which was used also in \cite{DBLP:conf/date/WelpK13}.
\end{itemize}

We have compared the performance of our tools with various implementations of the Boolean IC3 algorithm,
run on the translations of the benchmarks to the bit-level Aiger format: 
the PDR implementation in the ABC model checker (\abcpdr)~\cite{fmcad-een},
\tip~\cite{tip},
and \icthree~\cite{ic3ref}, the new implementation of the original IC3 algorithm as described in \cite{bradley}.
Finally, we have also compared with the \textsc{dprove} algorithm of ABC (\abcdprove),
which combines various different techniques for bit-level verification, including IC3.%
\footnote{We used ABC version {\tt 374286e9c7bc}, \tip{} {\tt 4ef103d81e} and \icthree{} {\tt 8670762eaf}.
}
We also tried \zthree, but it ran out of memory on most instances.
It seems that \zthree uses a Datalog-based engine for \bv, rather than PDR.

The results of the evaluation on \bv are reported in Figure~\ref{fig:expeval-bv}.
As we can see, both \abspdr{\bv} and \abstreepdr{\bv} outperform the bit-level IC3 implementations.
In this case, the CFG-based algorithm performs slightly better than the fully-symbolic one,
although they both solve the same number of instances.

\vspace{-1em}
\subsubsection{IC3 for \lia}
\label{sec:expeval:bv}

\begin{figure}[t!]
  \vspace{-1em}
  \hspace{-2ex}
  \begin{tabular}{cc}
    \includegraphics[scale=0.95]{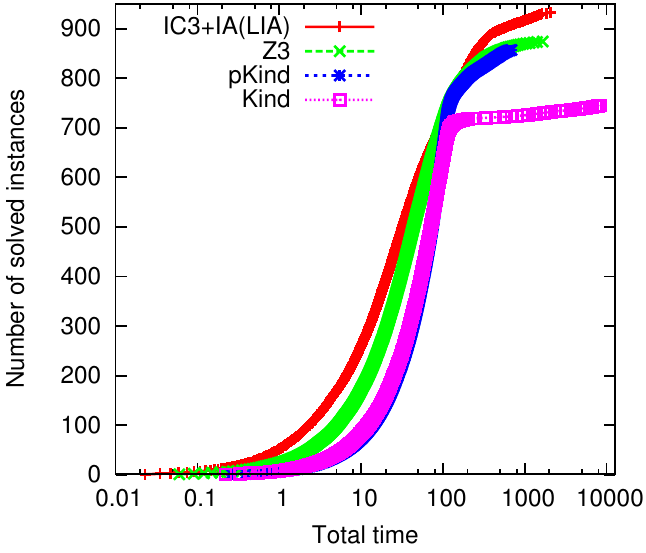}
    &
    \begin{tabular}[b]{l@{\hspace{1em}}rr}
      & & \\
      \hline
      \multicolumn{2}{l}{\bf Algorithm/Tool \phantom{X}\hfill \# solved} & {\bf Tot time} \\
      \hline
      \abspdr{\lia}      &           933   &       2064     \\
      \zthree            &           875   &       1654     \\
      \pkind             &           859   &        720     \\
      \kind              &           746   &       8493     \\
      \hline
      & & \\
      & & \\
      & & \\
      & & \\
      & & \\
      & & \\
      & & \\
    \end{tabular}
    \\
  \end{tabular}
  \caption{Experimental results on \lia benchmarks from Lustre programs \cite{kind-fmcad08}.  \label{fig:expeval-lia}}
\end{figure}

For our experiments on the \lia theory,
we have generated benchmarks using the Lustre programs available from the webpage
of the \kind model checker for Lustre~\cite{kind-fmcad08}.
Since such programs do not have an explicit CFG, we have only evaluated \abspdr{\lia},
by comparing it with \zthree
and with the latest versions of \kind as well as its parallel version \pkind~\cite{pkind}.%
\footnote{We used version {\tt 1.8.6c} of \kind and \pkind, 
which is not publicly available but was provided to us by the \kind authors.
\pkind differs from \kind because it exploits multi-core machines
and complements k-Induction with an automatic invariant generation procedure.
}

The results are summarized in Figure~\ref{fig:expeval-lia}.
Also in this case, \abspdr{\lia} outperforms the other systems.

\section{Conclusion}
\label{sec:conclusion}

In this paper we have presented \pdria, a new approach to the
verification of infinite state transition systems, based on an
extension of IC3 with implicit predicate abstraction.

The distinguishing feature of our technique is that \pdr 
works in an abstract state space, since the counterexamples to
induction and the relative inductive clauses are expressed
with the abstraction predicates. This is enabled by the use of implicit
abstraction to check (abstract) relative induction.
Moreover, the refinement in our procedure is fully incremental, 
allowing to keep all the clauses found in the previous iterations.

The approach has two key advantages. First, it is very general: the
implementations for the theories of LRA, BV, and LIA have been
obtained with relatively little effort. Second, it is extremely
effective, being able to efficiently deal with large numbers of
predicates.
Both advantages are confirmed by the experimental results, obtained on
a wide set of benchmarks, also in comparison against dedicated
verification engines.

In the future, we plan to apply the approach to other theories
(e.g. arrays, non-linear arithmetic) investigating other forms of
predicate discovery, and to extend the technique to
liveness properties.

\bibliographystyle{splncs03} 
\bibliography{main}

\begin{thebibliography}{10}
\providecommand{\url}[1]{\texttt{#1}}
\providecommand{\urlprefix}{URL }

\bibitem{smtoverview}
Barrett, C.W., Sebastiani, R., Seshia, S.A., Tinelli, C.: Satisfiability modulo
  theories. In: Handbook of Satisfiability, vol. 185, pp. 825--885. IOS Press
  (2009)

\bibitem{swmcc}
Beyer, D.: {Second Competition on Software Verification - (Summary of SV-COMP
  2013)}. In: Piterman, N., Smolka, S.A. (eds.) TACAS. LNCS, vol. 7795, pp.
  594--609. Springer (2013)

\bibitem{ic3ref}
Bradley, A.: {IC3ref}, \url{https://github.com/arbrad/IC3ref}

\bibitem{fmcad-liveness}
Bradley, A., Somenzi, F., Hassan, Z., Zhang, Y.: {An incremental approach to
  model checking progress properties}. In: Proc. of FMCAD (2011)

\bibitem{bradley}
Bradley, A.R.: {SAT-Based Model Checking without Unrolling}. In: Proc. of
  VMCAI. LNCS, vol. 6538, pp. 70--87. Springer (2011)

\bibitem{fmcad-chokler}
Chokler, H., Ivrii, A., Matsliah, A., Moran, S., Nevo, Z.: {Incremenatal formal
  verification of hardware}. In: Proc. of FMCAD (2011)

\bibitem{CGCAV12}
Cimatti, A., Griggio, A.: {Software Model Checking via IC3}. In: CAV. pp.
  277--293 (2012)

\bibitem{mathsat5}
Cimatti, A., Griggio, A., Schaafsma, B., Sebastiani, R.: {The MathSAT5 SMT
  Solver}. In: Piterman, N., Smolka, S. (eds.) Proceedings of TACAS. LNCS, vol.
  7795. Springer (2013)

\bibitem{CGL94}
Clarke, E., Grumberg, O., Long, D.: {Model Checking and Abstraction}. ACM
  Trans. Program. Lang. Syst.  16(5),  1512--1542 (1994)

\bibitem{fmcad-een}
Een, N., Mishchenko, A., Brayton, R.: {Efficient implementation of
  property-directed reachability}. In: Proc. of FMCAD (2011)

\bibitem{GS97}
Graf, S., Sa\"{\i}di, H.: {Construction of Abstract State Graphs with PVS}. In:
  CAV. pp. 72--83 (1997)

\bibitem{invgen}
Gupta, A., Rybalchenko, A.: Invgen: An efficient invariant generator. In:
  Bouajjani, A., Maler, O. (eds.) CAV. LNCS, vol. 5643, pp. 634--640. Springer
  (2009)

\bibitem{kind-fmcad08}
Hagen, G., Tinelli, C.: {Scaling Up the Formal Verification of Lustre Programs
  with SMT-Based Techniques}. In: Cimatti, A., Jones, R.B. (eds.) FMCAD. pp.
  1--9. IEEE (2008)

\bibitem{HJMM04}
Henzinger, T., Jhala, R., Majumdar, R., McMillan, K.: {Abstractions from
  proofs}. In: POPL. pp. 232--244 (2004)

\bibitem{HoderB12}
Hoder, K., Bj{\o}rner, N.: Generalized property directed reachability. In: SAT.
  pp. 157--171 (2012)

\bibitem{pkind}
Kahsai, T., Tinelli, C.: Pkind: A parallel k-induction based model checker. In:
  Barnat, J., Heljanko, K. (eds.) PDMC. EPTCS, vol.~72, pp. 55--62 (2011)

\bibitem{DBLP:conf/formats/KindermannJN12}
Kindermann, R., Junttila, T.A., Niemel{\"a}, I.: Smt-based induction methods
  for timed systems. In: Jurdzinski, M., Nickovic, D. (eds.) FORMATS. LNCS,
  vol. 7595, pp. 171--187. Springer (2012)

\bibitem{LahiriNO06}
Lahiri, S.K., Nieuwenhuis, R., Oliveras, A.: Smt techniques for fast predicate
  abstraction. In: CAV. pp. 424--437 (2006)

\bibitem{mcmillan}
McMillan, K.L.: {Lazy Abstraction with Interpolants}. In: Proc. of CAV. LNCS,
  vol. 4144, pp. 123--136. Springer (2006)

\bibitem{STT09}
Sharygina, N., Tonetta, S., Tsitovich, A.: {The synergy of precise and fast
  abstractions for program verification}. In: SAC. pp. 566--573 (2009)

\bibitem{tip}
Sorensson, N., Claessen, K.: Tip, \url{https://github.com/niklasso/tip}

\bibitem{fm09}
Tonetta, S.: {Abstract Model Checking without Computing the Abstraction}. In:
  FM. pp. 89--105 (2009)

\bibitem{VizelGS12}
Vizel, Y., Grumberg, O., Shoham, S.: {Lazy abstraction and SAT-based
  reachability in hardware model checking}. In: Cabodi, G., Singh, S. (eds.)
  FMCAD. pp. 173--181. IEEE (2012)

\bibitem{DBLP:conf/date/WelpK13}
Welp, T., Kuehlmann, A.: {QF\_BV} model checking with property directed
  reachability. In: Macii, E. (ed.) DATE. pp. 791--796 (2013)

\end{thebibliography}

\end{document}